\documentclass[11pt]{article}
\usepackage{amssymb, latexsym, mathtools, amsthm}
\pagestyle{plain}
\begingroup
    \makeatletter
    \@for\theoremstyle:=definition,remark,plain\do{%
        \expandafter\g@addto@macro\csname th@\theoremstyle\endcsname{%
            \addtolength\thm@preskip\parskip
            }%
        }
\endgroup
\usepackage{enumerate}
\usepackage{pgf,tikz}
\usepackage{pdfsync}
\usepackage{txfonts}
\usepackage{datetime}
\usepackage[T1]{fontenc}
\usepackage{booktabs}
 \usepackage{graphicx}
\definecolor{dnrbl}{rgb}{0,0,0.3}
\definecolor{dnrgr}{rgb}{0,0.3,0}
\definecolor{dnrre}{rgb}{0.5,0,0}
\usepackage[colorlinks=true, citecolor=dnrgr, linkcolor=dnrre, urlcolor=dnrre] {hyperref}
\usepackage{xcolor}
\usepackage{parskip}
\usepackage{tabularx, colortbl}
\usepackage{geometry}
 \geometry{
 left=26mm,
 right=26mm,
 top=31mm,
 bottom=31mm,  footskip=1.2cm} 
\theoremstyle{plain}
\newtheorem{thm}{Theorem}[section]
\newtheorem{prop}[thm]{Proposition}

\newtheorem{lem}[thm]{Lemma}

\theoremstyle{definition}

\newtheorem{defi}[thm]{Definition}
\makeatletter

\let\c@table\c@figure
\makeatother

\newcommand{\twolo}{2^{<\omega}}


\newcommand{\Nat}{\mathbb{N}}

\newcommand{\restr}{\upharpoonright}  
 
\newcommand{\un}{\uparrow} 
\newcommand{\de}{\downarrow} 
\newcommand{\intg}{\mathtt{Int}_g}
\DeclarePairedDelimiter{\ceil}{\lceil}{\rceil}
\DeclarePairedDelimiter{\floor}{\lfloor}{\rfloor}

\newcommand{\bigo}[1]{\mathop{\bf O}\/\left({#1}\right)}




\newcommand{\eg}{e.g.\ }
\newcommand{\ie}{i.e.\ }

\newcommand{\twome}{2^{\omega}} 
\newcommand{\twomel}{2^{<\omega}}

\renewenvironment{abstract}
 { \normalsize
  \list{}{
    \setlength{\leftmargin}{.0cm}%
    \setlength{\rightmargin}{\leftmargin}%
    }%
  \item {\bf \abstractname.} \relax}
 {\endlist}
\usepackage{titlesec}
\usepackage[round]{natbib}   

 \titleformat{\paragraph}
{\normalfont\normalsize\em\bfseries}{\theparagraph}{1em}{$\blacktriangleright$\hspace{0.2cm}}
\titlespacing*{\paragraph}{0pt}{3.25ex plus 1ex minus .2ex}{0.5ex plus .2ex}

\title{Granularity of wagers in games and the possibility of saving
\thanks{Barmpalias was supported by the 
1000 Talents Program for Young Scholars from the Chinese Government No.~D1101130, 
NSFC grants 11750110425 and 11971501.
Fang Nan was supported by the China Scholarship Council (Ministry of Education). 
Support by the Jiangsu Provincial Advantage Fund is acknowledged for a visit to the 
University of Nanjing during April 2019. We wish to thank the referees for useful feedback which improved
the presentation of our article.}}

\author{George Barmpalias  \and Nan Fang}
\date{\today\  at\ \currenttime}
\begin{document}
\maketitle
\begin{abstract}
In a casino where arbitrarily small bets are admissible, 
any betting strategy $M$
can be modified into a saving strategy that, not only is 
successful on each casino sequence where $M$ is
(thus accumulating unbounded wealth inside the casino) but
also saves an unbounded capital, by
permanently and gradually withdrawing it from the game.  
Teutsch showed that this is no longer the case
when a fixed minimum wager is imposed by the casino, thus exemplifying a
 {\em savings paradox} where a player can win unbounded wealth
inside the casino, but upon withdrawing a sufficiently large amount out 
of the game, he is forced into bankruptcy. 
We study the potential for 
saving under a shrinking minimum wager rule (granularity)
and its dependence on the rate of decrease (inflation)
as well as  {\em timid versus bold play.}
\end{abstract}
\vspace*{\fill}
\noindent{\bf George Barmpalias}\\[0.5em]
\noindent
State Key Lab of Computer Science, 
Institute of Software, Chinese Academy of Sciences, Beijing, China.\\[0.2em] 
\textit{E-mail:} \texttt{\textcolor{dnrgr}{barmpalias@gmail.com}}.
\textit{Web:} \texttt{\textcolor{dnrre}{http://barmpalias.net}}\par
\addvspace{\medskipamount}\medskip\medskip

\noindent{\bf Nan Fang}\\[0.2em]
\noindent Institut f\"{u}r Informatik, Ruprecht-Karls-Universit\"{a}t Heidelberg, Germany.\\[0.1em]
\textit{E-mail:} \texttt{\textcolor{dnrgr}{nan.fang@informatik.uni-heidelberg.de}.}
\textit{Web:} \texttt{\textcolor{dnrre}{http://fangnan.org}} \par
\addvspace{\medskipamount}\medskip\medskip

\vfill \thispagestyle{empty}
\clearpage

\setcounter{tocdepth}{2}
\tableofcontents

\section{Introduction}\label{1JyMSDjB2p}
In a casino where a fixed minimum wager is imposed on the bets, a 
player may be forced to quit the game due to insufficient capital required for joining the next round,
while still in possession of a non-zero sum.
This basic fact was exploited by  
\citet{cie/BienvenuST10} in an investigation of the strength of effective 
betting strategies that have restrictions on the
admissible wagers. This work motivated further studies on the power of 
restricted wager strategies, beyond the original
algorithmic framework, in the case
where the restriction is fixed throughout the game.
Given a set of reals $X$, an $X$-valued strategy is one that is restricted on wagers in $X$.
Given two finite sets $A,B$ of rationals, by
 \cite{TeutChalcraft}, $A$-valued strategies 
can successfully replace any $B$-valued strategy, if and only if 
there exists $r\geq 0$ such that $B\subseteq r\cdot A$
(where $r\cdot A$ denotes the multiples of the elements of $A$ with $r$).
In particular, subject to the given condition, given 
any strategy restricted to bets in $B$, we can produce a strategy 
that only bets values in $A$ and succeeds 
(producing  unbounded wealth) 
on any casino outcome sequence where the $B$-restricted strategy succeeds.
This characterization was extended to infinite sets, with some 
additional conditions, in \cite{Peretzagainst}.
Remarkably,
\citet{Teu14agCWGnp} (also see \cite[Theorem 14]{Peretzwager} for the corrected argument) 
constructed a casino which allows integer-wager strategies to succeed,
producing unbounded
wealth inside the casino, but any player who attempts to save 
an unbounded amount by removing it from the casino, is
forced to bankruptcy.
Motivated by these developments we
consider  {\em granular strategies} which are restricted to certain {\em discrete}, but not necessarily
integer, wagers and
\begin{equation}\label{VnOfYbkXhT}
\parbox{11cm}{study the potential for 
saving in betting strategies,
and its dependence on the
{\em granularity} of the wagers, as well as  
{\em timid versus bold play.}}
\end{equation}
%
By a (potentially biased) {\em casino} we mean a set of infinite binary sequences which represent
the sequences of possible binary outcomes
in a repeated betting game, along with possible restrictions on the admissible wagers at each stage. 
A betting strategy is a function that, given an initial capital, determines the wager and the
favorable outcome, given any position (represented by the binary string of the previous outcomes) 
in the game. A strategy is successful along a casino sequence if
along the game its capital is unbounded.
A {\em saving strategy} is a strategy along with a non-decreasing {\em savings function}  which indicates
the part of the capital at each position of the game which 
is saved, hence permanently removed from the active capital
of the strategy that can be used for betting.
A {\em saving strategy is successful} along a casino 
sequence if its savings function is unbounded.
A strategy (or saving strategy) is successful in a casino 
if its wagers meet the restriction of the casino and
it is successful in all outcome sequences of the casino.\footnote{Such casinos with 
restricted possible 
outcome-sequences were termed `probability-free' in
\cite{TeutChalcraft}.}
Given a sequence of reals $g=(g_s)$
a strategy is $g$-granular if it wagers an integer multiple of the {\em granule} $g_s$, which 
may be interpreted as the {\em value or purchasing power of one currency unit}
at stage $s$. The decrease of $g$ during the game  may be interpreted as the result of inflation.
Assuming that $g$ is non-increasing,
the {\em granularity} in \eqref{VnOfYbkXhT} refers to the
rate of decrease of $g$, indicating the  {\em inflation rate}.
We only distinguish between {\em fine granularity}: $\sum_s g_s<\infty$, and
{\em coarse granularity}: $\sum_s g_s=\infty$.
Alternatively, $g_s$ could be viewed as a minimum wager.

{\bf Our contribution.}
We show that under a rapidly decreasing
granularity, winning in an infinite play 
is equivalent to saving  unbounded capital during the successful game:
\begin{equation}\label{8GjGltExej}
\parbox{13.7cm}{{\em Possibility for saving:} under fine granularity, 
any betting strategy $M$ can be replaced with a saving strategy that
is successful on every  outcome stream where $M$ is successful.}
\end{equation}
Under the inflation interpretation of granularity discussed above, 
\eqref{8GjGltExej} says that saving is equivalent to
winning inside the casino, under a sufficiently high inflation environment.
As a converse of \eqref{8GjGltExej}, we show:
\begin{equation}\label{O2hEsN8zy}
\parbox{13.7cm}{{\em Impossibility of uniform saving:}  Under coarse granularity, 
there exists a betting strategy $M$ such that any saving strategy $N$
fails to save on some $X$ where $M$ is successful.}
\end{equation}
A different converse of \eqref{8GjGltExej}  concerns
{\em timid} strategies where the wagers are $\bigo{g}$ small.
\begin{equation}\label{YIe3MDo7Ue}
\parbox{13.7cm}{{\em Impossibility for timid strategies:}  Under coarse granularity, 
there exists $X$ and a timid betting strategy 
succeeding  on  $X$, such that no timid saving strategy succeeds on $X$.}
\end{equation}
It is customary to restrict the choice of strategies 
amongst a countable collection, typically representing the
feasible or implementable strategies. 
From an algorithmic 
perspective, as did many of the authors cited above, 
we may consider strategies that
are computable, in the sense that they can be simulated by a 
Turing machine. 
In this fashion, we adopt the semantics
of  \cite{Peretzagainst}, which also applies to most of the previous references:
for statements $\forall T\ \exists M$  about strategies $T,M$, 
strategy $M$ is computable in $T$, while in 
$\exists M \ \forall T$ strategy $M$ is computable and $\forall T$ ranges over a countable collection of
strategies $T$ that has been fixed in advance.
Hence the saving strategy of \eqref{8GjGltExej} is computable in the given $M$. In
\eqref{YIe3MDo7Ue} the betting strategy claimed is computable while
the universal `no timid saving strategy succeeds' refers to any fixed-in-advance countable
collection of saving strategies.
If we allow `bold play', \ie strategies that can bet
arbitrarily large wagers,  saving under coarse granularity
can be salvaged, in a weak sense.
\begin{equation}\label{B8qdW8Mt1}
\parbox{12cm}{{\em Saving with large wagers:} 
Under coarse granularity, given any timid
betting strategy $M$ there exists a countable family $(T_i)$
of saving strategies, such that for each $X$ where $M$ is successful, there exists
$i$ such that $T_i$ is successful on $X$.}
\end{equation}
%
The {\em integer-valued strategies} (where wagers are required to be integers)
have been studied in many articles \cite{cie/BienvenuST10, JCSSbarmp15, Herbert15},
and are
a special case of $g$-granular strategies.
Granular strategies have also played a crucial role in the analysis of restricted oracle computations
from algorithmically random sources \cite{bakugaopt}. Roughly speaking, oracle-computations
with oracle-use $n\mapsto n+g(n)$ correspond to $g$-granular strategies: in this sense, understanding
how the granularity of a betting strategy restricts its power can be used  to study the impact of
oracle-access restrictions  a computation.
Regarding our choice of monotone granularity, we note that some existing arguments 
regarding restricted wager strategies 
such as \cite[Theorem 14]{Peretzwager} actually use wagers of decreasing granularity.

{\bf Mathematics of casino games.}
Beyond the historical or illustrative examples of gambling problems that one finds in
probability texts, there are two systematic studies that establish a formal
mathematical framework for the stochastic analysis of  gambling. 
The first one is
based on measure theory and integration, is developed in the monograph
\cite{howgamble} and a considerable number of related articles such as
\cite{meagambart}, and is arguably rooted in the classic \cite{halmos1939}. The simple binary
outcome bets that we study in the present article are known as {\em red-and-black games}
in this line of research \cite[Chapter 5]{howgamble}, and conceptually relevant themes 
to the present work include
{\em permitted sets of bets} \cite[\S 7.3]{howgamble},
{\em timid versus bold play} in \cite{timidlargeart,timidloptiart,genboldart},
and income-tax or stake-grabbing casinos  \cite[\S 9.2--\S 9.3]{howgamble}
which are related to the saving concepts of the recent 
\cite{Teu14agCWGnp,Peretzwager} discussed above  and our study.

The second approach is more constructive and is based on an alternative  
game-theoretic foundation of probability developed in \cite{shafervovkprobfin}
and is rooted on von Mises' incomplete frequentist 
foundation for probability \cite{MR21400mises,misesbook,misesMScILLC},
suitably generalized by the notion of martingales. 
This strategy-based approach is closer to our analysis
and the recent  \cite{Teu14agCWGnp,Peretzwager}, as well as the 
foundations of algorithmic randomness \cite{Loveland66,Schnorr:71} and \cite[\S 6.3]{rodenisbook}.
Apart from these similarities, we have not found stronger connections between these two
classic approaches and the recent line of research that the present work belongs to, although
a comprehensive comparative analysis would be interesting as we discuss in \S\ref{XXYGMo8eSb}.

Types of betting strategies include 
{\em fixed-wager}, the {\em martingale} and its various modifications, as well as 
{\em proportional betting} in which wagers are set as a fixed proportion of the existing capital.
In the latter case, the choice of the proportion is designed to maximize the expectation of a random variable,
such as the capital, or most often the binary logarithm of the capital, which is known as the 
{\em Kelly criterion}, 
discovered by \cite{kellyorig} as an interpretation of Shannon's concept of information rate 
(it maximizes the expected growth of the capital, as opposed to the capital itself).\footnote{This is
one of the most well-known, widely applied
betting and investment strategies, see \cite{kellylogbook}.}
A survey of betting strategies shows that in many situations, 
{\em the amount you bet is actually more important than what you bet on}.
In addition, the success of 
staking methods often depends on the presence of wager bounds  (minimum or maximum bets)
as well as the initial capital (bankroll) of the player. 
In this context, the present work sheds light on the effects of
granularity in its different interpretations (minimum bets or inflation) in
the success of betting strategies.

{\bf Outline of our presentation.}
Toward a formal expression and proof of our main results
in \S\ref{Bmb4ouelqF}, we give the standard definitions and gambling interpretations
in full accordance with \cite{TeutChalcraft, Teu14agCWGnp,Peretzwager,Peretzagainst},
as well as our notion of granularity and some related basic facts.
In \S\ref{MEchtencE} we give the formal statement of our results as well as the proofs,
except for the proof of the formal analogue of \eqref{B8qdW8Mt1} 
which is given in \S\ref{iPgHQCWM76}.
In particular, \eqref{8GjGltExej} corresponds to
Theorem \ref{KOywTgVWA} while \eqref{O2hEsN8zy} and \eqref{YIe3MDo7Ue}
are included in Theorem \ref{esRYf6OWn5};
our last result \eqref{B8qdW8Mt1} is formalized in Theorem \ref{lL3i4uiQDZ}.
We conclude our presentation in \S\ref{XXYGMo8eSb} with a discussion of our results in the
light of algorithmic randomness, an open question and suggestions for further research on this topic.

\section{Martingales, granularity and savings}\label{Bmb4ouelqF}
Betting strategies are formalized by 
martingales, 
expressing the capital after each betting stage and each casino outcome.
Formally, a martingale is a function $M:\twomel\to\mathbb{R}^+$ with the property that
$2M(\sigma)=M(\sigma\ast 0)+M(\sigma\ast 1)$
for all $\sigma$. 
These deterministic (as opposed to probabilistic) martingales 
provide a formalization of betting 
strategies on an infinite coin-tossing game:
at stage $|\sigma|$ our capital is $M(\sigma)$ and our {\em wager} for the next bet is
$\big(M(\sigma\ast 1)-M(\sigma\ast 0)\big)/2$\footnote{Using the martingale property of $M$ the wager
can also be written as $M(\sigma\ast 1)-M(\sigma)$; for clarity, 
we opt for he first formula as, unlike the latter one, it expresses the
wager in the more general case when $M$ is a supermartingale (see below).}
which can be used in the multiplicative expression of martingales:
\begin{equation}\label{he89h1XJ3Q}
M(\sigma)=M(\lambda)\cdot
\prod_{i<|\sigma|} \left(1+(-1)^{\sigma(i)+1}\cdot \frac{w_M(\sigma\restr_{i})}{M(\sigma\restr_i)}\right)
\hspace{0.5cm}\textrm{where}\hspace{0.3cm}
w_M(\sigma)= \big(M(\sigma\ast 1)-M(\sigma\ast 0)\big)/2
\end{equation}
and $\sigma(i)$ is the value of the bit of $\sigma$ at position $i$ (where the first position is position 0).
If $w_M(\sigma)>0$ then at position $\sigma$ we bet on outcome 1, capital  $|w_M(\sigma)|$; otherwise
we bet the same capital on outcome 0. Hence
$M(\sigma\ast j)=M(\sigma)+(-1)^{\sigma(j)+1}\cdot w_M(\sigma)$ reflects the updated 
capital with respect to either outcome $j\in\{0,1\}$.
Saving strategies are formalized by supermartingales, which are functions
$M:\twomel\to\mathbb{R}^+$ such that
$2M(\sigma)\geq M(\sigma\ast 0)+M(\sigma\ast 1)$
for all $\sigma$. 
Supermartingales can be thought of as strategies (\ie martingales) with the difference that after each bet
there is a certain loss of liquid capital, \ie capital that can be used for betting. 
The {\em marginal savings} of a supermartingale $M$ at $\sigma$ is defined as
$M^{\ast}(\sigma)=M^{\ast}(\sigma)-\big(M(\sigma\ast 0)+M(\sigma\ast 1)\big)/2$
and is the amount that is lost from position $\sigma$ to the next bet, \ie the amount by which $M$ fails
to satisfy the martingale inequality  at $\sigma$.
Given a supermartingale $M$ define {\em the cover of $M$} to be the unique martingale 
whose initial capital and wagers are the same as those of $M$.
\[
\textrm{{\bf Savings of $M$:} \hspace{0.2cm}
$S_M(\sigma)=\widehat{M}(\sigma)-M(\sigma)$, 
\hspace{0.2cm}where $\widehat{M}$ denotes the cover of $M$.}
\]
Clearly $S_M(\sigma)$ is simply the sum of the marginal savings of $M$ on the initial segments of $\sigma$.
The wager of a supermartingale $M$ is also given by \eqref{he89h1XJ3Q}.

{\em Martingales as oracles.} In the following sections we often refer to oracle-computations in which on e of the oracles is a martingale
$M$. For such statements, note that
$M$ is a  function from $\twomel$ to the non-negative reals, so it can be
represented by a real which encodes a fast (\eg with modulus of convergence $n\mapsto 1/n$) rational approximation to $M(\sigma)$ for each $\sigma$.
It is such a representation of $M$ that is used as an oracle in computations from $M$, in the standard sense of
relative Turing computation. Alternatively, the reader may replace each martingale $M$ with a rational-valued martingale $M'$
which has the same asymptotic properties as $M$, and use $M'$ as an oracle (since $M'$ has a more straightforward representation
as a real). The fact that this replacement can be made without loss of generality, is due to a folklore fact 
about effective martingales, see 
\cite[Proposition 7.1.2]{rodenisbook} or \cite{Schnorr:71}.

\subsection{Strategy success, wager scaling and the savings trick}\label{W2nKOLMi5}
We say that a martingale $M$ (as a betting strategy) is {\em successful} 
along $X$ if $\limsup_n M(X\restr_n)=\infty$. We also say
that $M$ {\em successfully saves} (or the associated saving strategy is successful) 
if $S_M(X\restr_n)\to\infty$ as $n\to\infty$.
A folklore and useful fact for the case when the wagers are not required to be discrete is that
successful betting is equivalent to successful saving:
\begin{equation}\label{DxAECJj5D3}
\textrm{\bf Savings trick:}\hspace{0.3cm}\left(\ \ \parbox{10cm}{Each 
supermartingale $M$ computes
a supermartingale $N$ such that
$\lim_n S_N(X\restr_n)=\infty$ for each 
$X$ such that $\limsup_n M(X\restr_n)=\infty$.}\ \ \right)
\end{equation}
The idea behind the saving strategy $N$ 
in \eqref{DxAECJj5D3} is scaling the wagers, and is relevant to
the later sections of this article. Without loss of generality we may
assume that $M(\lambda)>1$.
At the beginning, $N$ bets identically to $M$, until some position of the game is reached where
$M(\sigma)$ is more than the double of the initial capital $M(\lambda)$. At such a position  $\sigma_1$ strategy
$N$ saves 1 (making the difference between the $N$ and $M$ capital equal to 1), and proceeds with the 
subsequent bets proportionally adjusted, where the proportion is $(M(\sigma_1)-1)/M(\sigma_1)$. 
At the next position $\sigma_2$ where $M$ doubles with respect to the previous marked value $M(\sigma_1)$, 
we repeat the same action, letting $N$ save another 1,
and adjusting the subsequent bets proportionally with respect to the ratio
$N(\sigma_2)/M(\sigma_2)$ and so on. 

By the proportionality of bets and 
the multiplicative form \eqref{he89h1XJ3Q} of strategies,
between positions $\sigma_1$ and $\sigma_2$ the ratio $N(\sigma)/M(\sigma)$ remains equal to
$(M(\sigma_1)-1)/M(\sigma_1)$. In particular, at position $\sigma_2$ where $M$ doubles its capital compared to
position $\sigma_1$, the same happens to $N$, compared to $N(\sigma_1)$.
Hence, given that
$N(\sigma_1)=M(\sigma_1)-1>1$, 
 inductively we have
$N(\sigma_n)\geq 2N(\sigma_{n-1})-1> 1$ and $S_N(\sigma_n)=n$ for each $n>1$ where $\sigma_n$ is defined.
Then it is clear that along any $X$ where $M$ is successful, 
the sequence $(\sigma_i)$ of initial segments of $X$ is totally defined,
hence showing the success of the saving strategy $N$ along $X$.
The savings trick implies that the standard success condition 
$\limsup_n M(X\restr_n)=\infty$ for a supermartingale $M$
is essentially equivalent to $\lim_n M(X\restr_n)=\infty$ in the sense
that $M$ computes a
supermartingale $T$ such that 
$\lim_n T(X\restr_n)=\infty$ for each $X$ where $M$ is successful.
%
%
%
\subsection{Discretizing the strategies and the effect on success and savings}\label{sESAlKb5h3}
Intuitively speaking, the `granularity' of a function $f:\twomel\to\mathbb{Q}$ measures how far the values of $f$
are from being integers. For example we may say that the granularity of $f$ is 
the function $g:\Nat\to\Nat$ such that $g(|\sigma|)$ is the minimum
non-negative integer such that $f(\sigma)$ is an integer multiple of $2^{-g(|\sigma|)}$.
Applying this notion to the wagers, we can model a stage-dependent minimum-bet policy
in the casino.
\begin{defi}[Granular martingales]
Given a non-decreasing $g:\Nat\to\Nat$, we say that a (super)martingale $M$ is $g$-granular
if for every string $\sigma$ the wager $w_M(\sigma)$ is an integer multiple of $2^{-g(|\sigma|+1)}$.
\end{defi}
A function $f:\twomel\to\mathbb{Q}^{+}$ is $g$-granular if for each
string $\sigma$ the value $f(\sigma)$ is an integer multiple of $2^{-g(|\sigma|)}$.
One may also consider to apply the notion of granularity to the capital function $\sigma\mapsto M(\sigma)$
of a strategy instead of its wagers, thus obtaining a stronger notion. 
However, as we observe below, such a distinction is not consequential in the present work.
By the above definitions of granularity it follows that
\begin{equation*}
\parbox{13cm}{given a non-decreasing $g:\Nat\to\Nat$ and a $g$-granular martingale $M$,
the function $\sigma\mapsto M(\sigma)$ is $g$-granular if and only if $M(\lambda)$ is
an integer multiple of $2^{-g(0)}$.}
\end{equation*}
Hence given $g$-granular martingale $M$ there exists a martingale $N$ which is computable
from $M,g$, such that 
the function $\sigma\mapsto N(\sigma)$ is $g$-granular and $|M(\sigma)-N(\sigma)|=\bigo{1}$.
More generally, we show that any 
$g$-granular (super)martingale $M$ can be easily transformed into a (super)martingale
which differs by at most a constant from 
$M$, it is $g$-granular as a function, and its savings function takes integer values.
\begin{lem}\label{nXAZQtLDNS}
Given non-decreasing $g:\Nat\to\Nat$ and a $g$-granular supermartingale $M$,
there exists a supermartingale $N$ such that $|M(\sigma)-N(\sigma)|=\bigo{1}$, 
the function $\sigma\mapsto N(\sigma)$ is $g$-granular
and $S_N(\sigma)\in\Nat$ for each $\sigma$. Moreover 
$N$ is computable from $M,g$, and if $M$ is a martingale then $N$
is also a martingale.
\end{lem}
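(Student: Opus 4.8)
The plan is to build $N$ from $M$ by the following bookkeeping: keep the wagers of $N$ identical to those of $M$ (so the "cover" of $N$ equals the cover of $M$ up to an additive constant coming from the starting capital), but redistribute the marginal savings of $M$ so that the cumulative savings only ever increase in integer steps, and the running capital is snapped to a $g$-granular grid. First I would normalize: by the displayed remark immediately preceding the lemma, after adding a constant of at most $2^{-g(0)}$ to the starting capital I may assume $M(\lambda)$ is an integer multiple of $2^{-g(0)}$, so that $\sigma\mapsto M(\sigma)$ is already $g$-granular as a function (here I use that $M$ is $g$-granular and $g$ is non-decreasing, so every partial-sum-of-wagers correction keeps $M(\sigma)$ on the finer grid $2^{-g(|\sigma|)}$). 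This reduces the problem to the savings part: I must produce $N$ with the same cover as $M$, still $g$-granular, with $S_N(\sigma)\in\Nat$ and $|S_M(\sigma)-S_N(\sigma)|=\bigo{1}$, whence $|M(\sigma)-N(\sigma)|=|S_N(\sigma)-S_M(\sigma)|=\bigo{1}$ as well.

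The key step is to define $N$ along each string by "deferred integer rounding" of the accumulated savings. For a string $\sigma$ let $s_M(\sigma)=S_M(\sigma)$ be the total savings the cover has withheld up to $\sigma$; this is a non-decreasing function of $\sigma$ along any branch. Set $S_N(\sigma)=\floor{s_M(\sigma)}$, and define $N(\sigma)=\widehat{M}(\sigma)-S_N(\sigma)$, where $\widehat M$ is the cover of $M$. Then $0\le S_M(\sigma)-S_N(\sigma)<1$, so $|M-N|<1$; the wagers of $N$ equal those of $\widehat M$, hence those of $M$, so $N$ inherits $g$-granularity; and $\widehat M$ is $g$-granular as a function because $\widehat M(\lambda)=M(\lambda)$ is on the $2^{-g(0)}$-grid and its increments are the $g$-granular wagers of $M$, so $N(\sigma)=\widehat M(\sigma)-S_N(\sigma)$ is $g$-granular (subtracting an integer preserves this). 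It remains to check that $N$ is a supermartingale and, when $M$ is a martingale, a martingale. For the martingale case this is immediate: $\widehat M$ is a martingale and $S_N$ is constant along each string's two successors exactly when $s_M$ crosses no integer there, but in general $S_N$ may jump by a positive integer from $\sigma$ to $\sigma\ast j$, which only subtracts a non-negative quantity, preserving the supermartingale inequality; when $M$ is a martingale it has no marginal savings, so $s_M\equiv S_M(\lambda)$ is constant, $S_N$ is the constant $\floor{S_M(\lambda)}$, and $N=\widehat M-\text{const}$ is a martingale. Finally $N$ is computable from $M,g$ since $\widehat M$, $S_M$, and the floor function are.

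The main obstacle I anticipate is the interaction between the two grids: I need $S_N$ to be integer-valued \emph{and} $N$ to remain on the $2^{-g(|\sigma|)}$-grid \emph{and} $N$ to stay a supermartingale, simultaneously. The subtle point is that naively flooring the savings could in principle force the marginal savings $N^{\ast}(\sigma)=N(\sigma)-\big(N(\sigma\ast0)+N(\sigma\ast1)\big)/2$ to become negative if the floor of $s_M$ increased "too early"; this is why it is essential to round \emph{down} rather than to the nearest integer — flooring guarantees $S_N$ is non-decreasing along branches with jumps that are non-negative integers, so $N^{\ast}(\sigma)=M^{\ast}(\sigma)+\big(S_N(\sigma\ast0)+S_N(\sigma\ast1)\big)/2-S_N(\sigma)\ge 0$ because the average of $S_N$ over the two successors is at least $S_N(\sigma)$ and $M^{\ast}(\sigma)\ge 0$. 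A secondary technical care point is that "$S_N(\sigma)\in\Nat$" requires $S_N(\lambda)=\floor{S_M(\lambda)}\ge 0$, which holds after the normalization provided we also ensure $M(\lambda)$ (and hence, if desired, $S_M(\lambda)$) is chosen with a non-negative integer part — trivially arranged by the harmless additive constant. Once these grid-compatibility checks are in place the bound $|M(\sigma)-N(\sigma)|<1=\bigo{1}$ and the computability claim are routine.
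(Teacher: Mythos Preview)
Your proposal is correct and follows essentially the same approach as the paper: define $N(\sigma)$ as the cover of $M$ (with suitably normalized initial capital) minus $\floor{S_M(\sigma)}$. The only cosmetic difference is that the paper rounds the initial capital up to the nearest \emph{integer} (setting $T(\lambda)=\ceil{M(\lambda)}$) rather than to the nearest multiple of $2^{-g(0)}$ as you do; both choices make the cover $g$-granular as a function and differ from $\widehat{M}$ by $\bigo{1}$, so the rest of the argument is identical.

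Two minor slips worth cleaning up: your remark that the normalization makes $\sigma\mapsto M(\sigma)$ itself $g$-granular is only correct for the cover $\widehat{M}$, not for $M$ (the savings of a supermartingale need not be granular), though you use $\widehat{M}$ where it matters; and your displayed formula $N^{\ast}(\sigma)=M^{\ast}(\sigma)+\big(S_N(\sigma\ast0)+S_N(\sigma\ast1)\big)/2-S_N(\sigma)$ has a spurious $M^{\ast}(\sigma)$ term (the correct value is just $\big(S_N(\sigma\ast0)+S_N(\sigma\ast1)\big)/2-S_N(\sigma)$, since $\widehat{M}$ is a martingale), but as $M^{\ast}(\sigma)\ge 0$ this does not affect the inequality you need.
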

\begin{proof}
Let $T$ be the unique martingale which has the same wagers
as $M$ and $T(\lambda)=\ceil{M(\lambda)}$. Then clearly the function $\sigma\mapsto T(\sigma)$
is $g$-granular, $\widehat{M}(\sigma)\leq T(\sigma)$ and 
$|\widehat{M}(\sigma)-T(\sigma)|=\bigo{1}$.
Define
$N(\sigma)=T(\sigma)-\floor{S_M(\sigma)}$ and note that since $|\widehat{M}(\sigma)-T(\sigma)|=\bigo{1}$ we have
$|N(\sigma)-M(\sigma)|=\bigo{1}$. Since $\widehat{M}(\sigma)\leq T(\sigma)$
we also have $0\leq M(\sigma)\leq N(\sigma)$. By the properties of $T$, the function 
$\sigma\mapsto N(\sigma)$ is $g$-granular. Finally, note that $N$ is computable from $M$ and $g$,
and in the case when $M$ is a martingale we have $S_M(\sigma)=0$ for all $\sigma$, so $N=T$ and $N$ is a martingale.
\end{proof}
Granularity is in conflict with scaling operations on the wagers, so
the saving method of \S\ref{W2nKOLMi5} breaks down in the case of granular strategies.
However  the following property can be salvaged, albeit non-uniformly.
\begin{prop}[Success notions for granular strategies]\label{BrReKBzoKL}
Suppose that $g\colon \Nat\mapsto \Nat$
is nondecreasing and 
$M$ is a $g$-granular supermartingale which is successful on some sequence $X$.
Then there exists a
$g$-granular supermartingale $T$ which is computable from $M$ and $\lim_n T(X\restr_n)=\infty$.
\end{prop}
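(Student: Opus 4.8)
The plan is to mimic the savings trick of \S\ref{W2nKOLMi5}, but to circumvent the obstruction that granularity forbids arbitrary proportional rescaling of the wagers. First I would invoke Lemma \ref{nXAZQtLDNS} to replace $M$ by a $g$-granular supermartingale (still written $M$) whose savings function $S_M$ takes integer values; this costs only an additive constant and is harmless for the asymptotic conclusion $\lim_n T(X\restr_n)=\infty$, and it keeps computability from $M$ intact. The key point in the original savings trick was that at each doubling position $\sigma_k$ we could rescale subsequent wagers by a ratio $(M(\sigma_k)-1)/M(\sigma_k)$ and thereby permanently bank one unit of capital. With granular wagers we cannot multiply by such a ratio and remain $g$-granular; instead the idea is to bank capital by \emph{truncating} rather than rescaling: whenever the running (uncovered) capital of our new strategy $T$ exceeds a fresh integer threshold, we \emph{lower} $T$'s capital by an integer amount (permanently, i.e.\ we put it into savings) and continue betting with exactly the same wagers as $M$ from the new, lowered level. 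Since $M$ is $g$-granular and the integer offsets we subtract are integer multiples of $2^{-g(|\sigma|)}$ (integers are), the resulting $T$ is again $g$-granular, and it is trivially computable from $M$ (and $g$).

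More concretely: set $T(\lambda)=M(\lambda)$ and let $T$ follow $M$'s wagers exactly. Maintain a nondecreasing integer savings counter. Whenever, along the current play, $T(\sigma)\ge 2$ times the value of $T$ at the last ``reset'' (or $\ge$ some increasing schedule of integer targets), declare a reset at $\sigma$: increase the savings counter by $1$ and thereafter subtract one additional unit from $T$ while continuing to copy $M$'s wagers. Because $T$ and $M$ now differ only by an integer that is constant between consecutive resets, and because $M$ is successful on $X$, the doublings of $M$ along $X$ force infinitely many resets along $X$; between consecutive reset positions $\sigma_k\prec\sigma_{k+1}$ on $X$ the strategy $T$ undergoes the same multiplicative growth as $M$, so one checks inductively (exactly as in \S\ref{W2nKOLMi5}) that $T(\sigma_{k+1})\ge 2T(\sigma_k)-1>1$ once $T(\sigma_1)>1$, hence $T(\sigma_k)\to\infty$ and, more importantly, the savings counter $S_T(X\restr_n)\to\infty$. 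In particular $\widehat{T}(X\restr_n)=T(X\restr_n)+S_T(X\restr_n)\to\infty$; but here we do not even need the cover — we only need $\lim_n T(X\restr_n)=\infty$ itself, which follows from the doubling bound. One must only be slightly careful that $T(\sigma)\ge 0$ is preserved: a reset is performed only when the current capital safely exceeds the amount being banked, and between resets the capital of $T$ stays at least $M(\sigma)$ minus a fixed integer, which we may assume is $\ge 0$ by first scaling $M$ up by an additive constant (again harmless).

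The main obstacle — and the reason the statement is only claimed \emph{non-uniformly}, i.e.\ conditionally on a sequence $X$ on which $M$ succeeds — is that without knowing such an $X$ in advance we have no control over \emph{where} or \emph{whether} $M$ ever doubles, so there is no uniform schedule of reset thresholds that works on all plays simultaneously; on a play where $M$ stays bounded, any fixed banking schedule either banks nothing or drives $T$ negative. Here, however, we are \emph{given} that $M$ succeeds on $X$, so it suffices that the reset mechanism reacts to the actual observed capital growth of $M$: as soon as $M$ (equivalently $T$, up to a fixed integer) doubles relative to the last reset, we bank a unit. The remaining routine points are verifying the $g$-granularity of $T$ (immediate, since we only ever add integers to $M$'s capital and copy $M$'s wagers), computability of $T$ from $M$ (the reset rule is a $\Delta^0_1(M)$ condition once we fix the rational approximation to $M$ described in \S\ref{Bmb4ouelqF}), and the inductive capital estimate $T(\sigma_{k+1})\ge 2T(\sigma_k)-1$, which is the same computation as in the savings trick and need not be repeated here.
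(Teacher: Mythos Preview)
Your proposal has a genuine gap in the non-negativity argument for $T$. You assert that between resets $T(\sigma)$ equals $M(\sigma)$ minus ``a fixed integer'', and that this can be kept $\ge 0$ by first adding an additive constant to $M$. But the integer you subtract is the number $k$ of resets performed along the path to $\sigma$, and this count is \emph{unbounded}: it tends to infinity along $X$, and can grow along other branches as well. No single additive constant absorbs an unbounded offset. Concretely, after the $k$th reset along some branch your $T$ equals $M-k$ there; if $M$ subsequently dips near $0$ (perfectly consistent with $\limsup_n M(X\restr_n)=\infty$ when $\liminf_n M(X\restr_n)<\infty$, and in any case $T$ must be $\ge 0$ on \emph{every} string, not only along $X$), then $T$ becomes negative and is not a supermartingale. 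The same dip also shows that your doubling bound $T(\sigma_{k+1})\ge 2T(\sigma_k)-1$, which only controls $T$ at reset positions, does not yield $\lim_n T(X\restr_n)=\infty$. Incidentally, the claim that between resets ``$T$ undergoes the same multiplicative growth as $M$'' is false: with identical wagers it is $T-M$, not $T/M$, that is constant.

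The paper exploits the non-uniformity in a way your construction never does. If $\lim_n M(X\restr_n)=\infty$ one simply takes $T:=M$. Otherwise one fixes a rational $q>\liminf_n M(X\restr_n)$, starts $T$ with capital $q$, and alternates: an \emph{idle} phase in which $T$ places no bets while waiting for $M$ to drop below $q$, followed by an \emph{active} phase in which $T$ copies $M$'s wagers until $M$ exceeds $q+1$. At the start of each active phase $T\ge q>M$, and since the wagers agree one has $T>M\ge 0$ throughout the phase; each completed cycle increases $T$ by more than $1$, and the hypotheses $\liminf_n M(X\restr_n)<q$ and $\limsup_n M(X\restr_n)=\infty$ force infinitely many cycles along $X$, so $\lim_n T(X\restr_n)=\infty$. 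No banking is involved, and non-negativity is automatic. Note that your reset rule uses no $X$-specific datum at all, so your $T$ is in fact \emph{uniform} in $M$; that is exactly why it cannot cope with branches on which $M$ rises repeatedly and then crashes.
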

\begin{proof}
In the case where $\lim_n M(X\restr_n)=\infty$ we can simply let $T:=M$. Otherwise
let $q$ be a positive rational greater than $\liminf_n M(X\restr_n)$ and let $T$ have initial capital $q$.
Let $m_{-1} = 0$.
Then let $T$ produce part of the bets of $M$ along an arbitrary sequence $Y$ as follows:
wait until some $n_0 > m_{-1} $ such that $M(Y\restr_n) < q$ (if such number does not exist, let $n_0=\infty$), and then let $m_0$ be the least $m>n_0$ such that
$M(Y\restr_m)>q+1$ (if such number does not exist, let $m_0=\infty$). 
In the interval $[m_{-1},n_0)$ the strategy $T$ does not place any bets, 
while 
in $[n_0,m_0)$ it places the same bets (\ie the same wagers) that $M$ does, along $Y$.
Hence $T(Y\restr_{n_0}) = T(\lambda) = q > M(Y\restr_{n_0})$
and 
$T(Y\restr_n) = M(Y\restr_n) - M(Y\restr_{n_0}) + T(Y\restr_{n_0}) > M(Y\restr_n) \ge 0$,
for each $n\in [n_0,m_0]$.
Moreover, in the case that $m_0<\infty$, $M(Y\restr_{m_0})-M(Y\restr_{n_0})>1$, so
$T(Y\restr_{m_0}) > T(Y\restr_{n_0})+1 > q+1$. 
This process repeats in the same way, defining the intervals $[m_{i-1},n_i)$ where $T$ does not bet,
and the adjacent intervals $[n_i, m_i)$ where $T$ copies the bets of $M$. If for some $i$ we have
$n_i=\infty$ then after position $Y\restr_{m_{i-1}}$ along $Y$ the value of strategy $M$ never drop below $q$ and the value of $T$ never change.
Or if for some $i$ we have
$m_i=\infty$ then after position $Y\restr_{n_i}$ strategy $T$ cofinaly copies the bets of $M$
along $Y$.
The argument that we used above to show that $T$ is non-negative, inductively shows that 
for each $i\geq 0$ such that $m_i<\infty$ and each $n>n_i$ we have 
$T(Y\restr_{n})>i$. Moreover clearly $T$ is $g$-granular and computable from $M$.
Finally, in the case where $\liminf_n M(Y\restr_n) < q$ and $\limsup_n M(Y\restr_n)=\infty$, the endpoints $n_i,m_i$ are defined for all $i\in\Nat$,
which means that $\lim_n T(Y\restr_{n})=\infty$.
\end{proof}
In the rest of this article $g$ will always denote a function from $\Nat$ to $\Nat$. 
From the proof of Proposition \ref{BrReKBzoKL} we may extract the following useful fact.
\begin{equation}\label{ok15FgmIQl}
\parbox{13cm}{if a computable  $g$-granular strategy succeeds on $X$ but no
such saving strategy does, any successful
computable $g$-granular strategy $M$ on $X$ 
has $\lim_n M(X\restr_n)=\infty$.
}
\end{equation}
Indeed, in the second case of the proof of Proposition \ref{BrReKBzoKL}
where $\liminf_n M(X\restr_n)<\infty$, we essentially make $T$ a saving strategy:
$T$ can be easily modified into a supermartingale $N$ such that $\lim_n S_N(X\restr_n)=\infty$.
Since $N$ only depends on $M$ and a rational upper bound of $\liminf_n M(X\restr_n)$, 
we may conclude that: from any nondecreasing $g$ and $g$-granular supermartingale $M$
we can compute $g$-granular supermartingales $N_q, q\in \mathbb{Q}^+$
(where $\mathbb{Q}^+$ denotes the set of positive rational numbers)
such that for each $X$ where $M$ is successful and $\liminf_n M(X\restr_n)<q$
we have
$\lim_n S_{N_q}(X\restr_n)=\infty$. This is a formal expression of \eqref{ok15FgmIQl}.

\section{The effect of fine or coarse granularity on saving}\label{MEchtencE}
Given nondecreasing $g$, by {\em fine} or {\em coarse} granularity we mean
that $\sum_{i} 2^{-g(i)}$ is finite or infinite, respectively.
We first show \eqref{8GjGltExej} of \S\ref{1JyMSDjB2p}, which informally asserts that
under fine granularity, betting strategies  can be modified so that they also save, whenever they win.
We stress that this modification requires not only
knowledge of $g,M$, but also an upper bound for $\sum_{n} 2^{-g(n)}$, as the following
formal expression indicates.
\begin{thm}[Savings under fine granularity]\label{KOywTgVWA}
Given $G\in\Nat$, nondecreasing $g$ such that  $\sum_{n} 2^{-g(n)}<G$ 
and any  supermartingale $M$,
there exists a $g$-granular supermartingale 
$N$, computable from $G,g,M$, such that $\lim_n S_N(X\restr_n)=\infty$ for each $X$ such that 
$\limsup_n M(X\restr_n)=\infty$.\footnote{With respect to the use of $M$ as an oracle, recall the remarks of \S\ref{Bmb4ouelqF} about
using real-valued martingales as oracles.}
\end{thm}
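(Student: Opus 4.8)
The plan is to emulate the savings trick of \S\ref{W2nKOLMi5}, but to replace the multiplicative scaling of wagers (which is incompatible with granularity) by an \emph{additive} correction: every time we withdraw one unit into savings, we compensate by slightly \emph{increasing} the wagers of $M$ rather than shrinking them, so that the running capital of $N$ stays at least as large as that of $M$. The point is that increasing a wager by an integer multiple of $2^{-g(|\sigma|+1)}$ preserves $g$-granularity, whereas multiplying it by a proportion $<1$ in general does not. Concretely, I would first apply Lemma \ref{nXAZQtLDNS} to $M$, so without loss of generality $M$ is a $g$-granular supermartingale, $\sigma\mapsto M(\sigma)$ is $g$-granular, $S_M$ is integer-valued, and (rescaling the initial capital upward if necessary) $M(\lambda)\ge 2$. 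Then I would build $N$ as a supermartingale whose wager at $\sigma$ equals $w_M(\sigma)$ plus a nonnegative ``boost'' term, chosen to be an integer multiple of $2^{-g(|\sigma|+1)}$, together with a savings function that increments by $1$ each time $M$ has doubled since the last increment.

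The key computation is controlling the boost. Suppose along some $X$ we have marked positions $\sigma_0\prec\sigma_1\prec\cdots$ where $M$ successively doubles, i.e.\ $M(\sigma_{k+1})\ge 2M(\sigma_k)$ and $M(\sigma_0)\ge 2M(\lambda)$, and $N$ saves one unit at each $\sigma_k$. Write $D(\sigma)=\widehat N(\sigma)-M(\sigma)$ for the ``excess'' of the cover of $N$ over $M$; then $N(\sigma)=M(\sigma)+D(\sigma)-S_N(\sigma)$, and I need $D(\sigma)\ge S_N(\sigma)$ so that $N\ge M\ge 0$. The idea is to generate this excess by betting a little extra on the \emph{same} favourable outcome that $M$ bets on: if at stage $n=|\sigma|$ I add a boost $b_n\ge 0$ to the wager, then whenever $M$ guesses correctly the cover of $N$ gains an extra $b_n$ relative to $M$, and whenever $M$ guesses wrongly it loses an extra $b_n$. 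Since we have no control over the outcome sequence, the honest bound is that $\widehat N$ could fall behind $M$ by as much as $\sum_{i<n} b_i$ in the worst case; so instead I should only ever boost on a step where $M$ \emph{did} guess correctly at the previous step, or — cleaner — maintain $D$ as a nonincreasing-risk quantity by a ledger argument. The robust way: treat the boosts as a separate side-martingale $P$ with $P(\lambda)=0$ betting the same directions as $M$ with small granular wagers; then $\widehat N=M+P$ is automatically a martingale, it is $g$-granular, and $N=\widehat N-\floor{S}$ for the chosen integer savings schedule $S$. The requirement becomes: along every $X$ with $\limsup_n M(X\restr_n)=\infty$, we have $P(X\restr_n)\ge S(X\restr_n)$ cofinitely, \emph{and} $S(X\restr_n)\to\infty$.

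Here is where fineness enters, and this is the step I expect to be the main obstacle. The natural choice is to make $P$ bet, at stage $n+1$, a wager of exactly $2^{-g(n+1)}$ in the same direction as $M$'s wager (betting $0$ when $w_M=0$); this is the smallest admissible nonzero granular wager, it keeps $P$ a legitimate supermartingale with $P\ge -\sum_i 2^{-g(i+1)}\ge -G$, hence $\widehat N\ge M-G$, only an additive constant loss, absorbed by starting $M$ higher. But $P$ betting the \emph{minimum} wager does not obviously grow; it grows only in proportion to how often $M$ bets and guesses right. The fix is to scale the boost to $M$'s own wager: let the boost at stage $n+1$ be the largest integer multiple of $2^{-g(n+1)}$ that is $\le 2^{-G}\cdot w_M(\sigma)$ (so a fixed small \emph{fraction} of $M$'s wager, rounded down to a granule). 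Then between marked positions $\sigma_k,\sigma_{k+1}$, where $M$ multiplies by $\ge 2$, the multiplicative structure \eqref{he89h1XJ3Q} shows the cover of $N$ also roughly multiplies — losing at each step only the rounding error $\le 2^{-g(n+1)}$, summing to $<2^{-G}G<1$ over the \emph{whole} game by fineness. So $D$ (equivalently $P$) accumulates at least one unit of genuine excess per doubling, minus a once-and-for-all total rounding loss bounded by the fine-granularity constant $G$. Choosing the savings increments to lag a fixed amount behind this accumulated excess then yields $S_N(X\restr_n)\to\infty$ whenever $M$ doubles infinitely often, i.e.\ whenever $\limsup_n M(X\restr_n)=\infty$. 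The delicate point to get right in the writeup is the bookkeeping of the total rounding error and verifying it is globally bounded (not merely bounded between consecutive marks) precisely because $\sum_n 2^{-g(n)}<G<\infty$; everything else is the routine translation of the \S\ref{W2nKOLMi5} argument into this additive, granule-rounded setting, together with the observation that $N$ is computable from $G,g,M$ since all the marks and boosts are.
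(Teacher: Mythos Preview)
Your proposal contains a genuine gap, located precisely where you say ``the cover of $N$ also roughly multiplies.'' With a \emph{fixed} boost fraction $c=2^{-G}$, the wager of $\widehat N$ is $(1+c)\,w_M(\sigma)$ (up to rounding), and hence $\widehat N(\sigma)-\widehat N(\lambda)=(1+c)\bigl(M(\sigma)-M(\lambda)\bigr)$ plus a total rounding error bounded by $\sum_n 2^{-g(n)}<G$. In other words $\widehat N$ is an \emph{affine} function of $M$, not a multiplicative one; the factors in \eqref{he89h1XJ3Q} for $\widehat N$ have $\widehat N(\sigma\restr_i)$ in the denominator, not $M(\sigma\restr_i)$, so they do not track $M$'s factors. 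Consequently, along any $X$ on which $M$ is successful, between consecutive doubling marks $\sigma_k\prec\sigma_{k+1}$ the value $M$ may fall arbitrarily close to $0$, and then $\widehat N$ falls to the fixed quantity $\widehat N(\lambda)-(1+c)M(\lambda)+O(G)$, independent of $k$. Since by that point you have already withdrawn $S_N=k$, for all large $k$ you get $N=\widehat N-S_N<0$. No finite choice of $N(\lambda)$ repairs this: the obstruction is that your excess $D=\widehat N-M$ is tied to the \emph{current} size of $M$, not to the number of doublings so far.

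The paper's proof does not avoid multiplicative scaling; on the contrary, it embraces it. Your premise that scaling is ``incompatible with granularity'' is too strong: one can scale and then \emph{round} to the nearest granule, and the total rounding loss along any path is at most $\sum_n 2^{-g(n)}<G$ precisely by fine granularity. Concretely, the paper takes the wager of $N$ at $\sigma$ to be the granular rounding of $w_M(\sigma)\cdot\bigl(\widehat N(\sigma)-l(\sigma)\bigr)/M(\sigma)$, where $l(\sigma)$ is the number of doublings so far; this is a \emph{dynamic} ratio, not a fixed $c$. The key computation (Lemma~\ref{lem:tmg1}) tracks the ratio $\bigl(\widehat N(\sigma)-l(\sigma)\bigr)/M(\sigma)$ and shows it decreases by at most $2^{-g(|\sigma|)}$ per step from rounding, plus at most $2^{-l(\sigma)}$ at each new doubling from the unit withdrawal; starting the ratio at $G$ keeps it above $1$ forever. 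This guarantees $\widehat N(\sigma)-l(\sigma)>M(\sigma)\ge 0$ at \emph{every} $\sigma$, including the dips of $M$ between marks, which is exactly the step your fixed-fraction scheme cannot supply.
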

A  betting strategy $M$ is {\em $g$-timid} if it is $g$-granular and its wager function is $\bigo{2^{-g(n)}}$.
A family $(N_i)$ of saving strategies is a
{\em saving cover} of $M$ if for each $X$ where $M$ succeeds, there exists $i$ such that
$N_i$ successfully saves on $X$. A saving cover $(N_i)$ is {\em bounded} 
if it has finite total initial capital, \ie
$\sum_i N_i(\lambda)<\infty$.\footnote{Note that a bounded family of saving strategies need not be
co-finally trivial (\eg all sufficiently large bets are zero). The reason is that
their initial capitals may all be non-zero, and each strategy could start betting non-zero wagers after
some point when the granularity allows the use of a fraction of their capital. }

We formalize the impossibilities \eqref{O2hEsN8zy} and \eqref{YIe3MDo7Ue} 
 of \S\ref{1JyMSDjB2p}
under coarse granularity as follows.
%
\begin{thm}[Impossibility of bounded or timid countable cover]\label{esRYf6OWn5}
Given $g$ with $\sum_i 2^{-g(i)}=\infty$ there exists $g$-timid $M$, computable from $g$, 
such that for any countable
family $(T_i)$  with one of the following properties:
\begin{enumerate}[\hspace{0.5cm}(a)]
\item  $(T_i)$ is a bounded family of $g$-granular saving strategies
\item  each $T_i$ is a $g$-timid saving strategy
\end{enumerate}
there exists $X$ such that
 $\limsup_n M(X\restr_n)=\infty$ and $\limsup_n S_{T_i} (X\restr_n)<\infty$ for each $i$.
\end{thm}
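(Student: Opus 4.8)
The plan is to build a single $g$-timid martingale $M$ by a priority-style construction that diagonalizes against all countable families $(T_i)$ of the two prescribed types simultaneously, by constructing a suitable outcome sequence $X$ along which $M$ wins but each $T_i$ fails to save. Since $M$ must be computable from $g$ alone (not from the $(T_i)$), the usual trick is to make $M$ a ``slow but sure'' strategy: on the outcome sequence we eventually pick, $M$ bets a fixed small fraction of the currently affordable granule at every stage in a fixed favorable direction, and because $\sum_i 2^{-g(i)}=\infty$ (coarse granularity), such timid betting on a favorable stream accumulates unbounded capital. Concretely, I would have $M$ wager $2^{-g(n+1)}$ (or a constant multiple thereof, kept $\bigo{2^{-g(n)}}$ so that $M$ is $g$-timid) on outcome $b_n$ at stage $n$, where the bit sequence $b=(b_n)$ is the sequence $X$ we are constructing; on the ``correct'' side the capital grows additively by roughly $\sum_n 2^{-g(n+1)}=\infty$, so $\limsup_n M(X\restr_n)=\infty$. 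The freedom we retain is the choice of the bits $b_n$, and this is what we use to defeat the savings strategies.

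The diagonalization itself is against the savings, not the capital: for a given family $(T_i)$ we must find $X$ with $b_n=X(n)$ such that for every $i$, $\limsup_n S_{T_i}(X\restr_n)<\infty$. The key point is that savings are \emph{monotone and irreversible}: once $T_i$ has saved some amount by stage $n$, that amount is permanently withdrawn from $T_i$'s liquid capital. So if we can force $T_i$ to lose its liquid capital, it can never save more than a bounded amount thereafter. I would proceed in stages, handling requirement $i$ at stage $s$ by steering the bits against $T_i$'s current net wager: at each step, choose the next bit $b_n$ to be the outcome on which $T_i$'s wager loses money (breaking ties arbitrarily). Because each $T_i$ is either part of a \emph{bounded} family (total initial capital $<\infty$, case (a)) or is itself \emph{$g$-timid} (wagers $\bigo{2^{-g(n)}}$, case (b)), and because in either case the amount $T_i$ can have saved up to the point we start attacking it is finite, steering the bits against $T_i$ drains its liquid capital in finitely many steps without needing to know anything about the later $T_j$. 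After $T_i$'s liquid capital hits (near) zero, its savings are frozen up to a bounded value, and we move on to attack $T_{i+1}$ from the current position, and so on; we dovetail so each $i$ is attacked cofinitely. The resulting $X$ has $M$ winning (the $M$-bets are in the single fixed favorable direction $b_n$, whatever bits we end up choosing) while each $S_{T_i}$ is eventually constant.

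The main obstacle is ensuring that attacking $T_i$ (and in case (b) the infinitely many future $T_j$'s as well) does not cost us the success of $M$, i.e.\ that steering bits adversarially against the $T_i$ is consistent with $M$'s capital still diverging. This is precisely why $M$ must be constructed along \emph{with} the bit sequence, with $M$ always betting on the chosen bit $b_n$: then $M$ never loses, it only gains, and $\sum_n 2^{-g(n+1)}=\infty$ guarantees divergence regardless of the adversarial bit choices. In case (a) the boundedness hypothesis $\sum_i T_i(\lambda)<\infty$ is what lets us drain each $T_i$ with a uniform, finite ``budget'' of bits; I would quantify this by noting that the total liquid capital ever available to the whole family is bounded, so the number of steps needed to drive $T_i$'s liquid capital below $\epsilon_i$ (for a summable sequence $\epsilon_i$) is controlled. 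In case (b) the $g$-timidity of each $T_i$ means its wagers are $\bigo{2^{-g(n)}}$, so over the (infinitely many) stages between our attacks on it, $T_i$ cannot have saved more than a constant times a \emph{tail} of $\sum_n 2^{-g(n)}$ — wait, that tail is infinite under coarse granularity, so this needs more care: the correct argument is that between attacks $T_i$'s liquid capital can grow by at most a bounded amount \emph{only if} we also occasionally return to re-drain it, so the dovetailing schedule must revisit each $i$ infinitely often with attacks long enough to re-zero its liquid capital each time; since each such re-attack again uses only bits (on which $M$ still wins), this is consistent. Making this scheduling precise, and verifying that in case (b) the accumulated $S_{T_i}$ between consecutive drains stays bounded (which follows because each re-drain resets the liquid capital and the savings increment per block is bounded by the liquid capital present, which we keep bounded), is the technical heart of the proof.
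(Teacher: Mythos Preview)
Your proposal contains a fundamental quantifier error. You write that $M$ should ``wager $2^{-g(n+1)}$ on outcome $b_n$, where $b=(b_n)$ is the sequence $X$ we are constructing,'' so that ``$M$ never loses, it only gains.'' But the theorem demands a single $M$ computable from $g$ alone, fixed \emph{before} the family $(T_i)$ is given; only then, for each such family, do we produce an $X$. Since $X$ is built by diagonalizing against the $T_i$, the bits $b_n$ depend on $(T_i)$, and an $M$ that bets on $b_n$ would not be computable from $g$. The paper's $M$ simply bets $2^{-g(n+1)}$ on outcome $1$ at every node. This means that whenever we choose outcome $0$ in building $X$ (to hurt some $T_i$), $M$ \emph{loses} $g^{+}$, and the real content of the proof is to show that these losses are outweighed by the gains.

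Once $M$ is fixed this way, the ``drain $T_i$'s liquid capital'' mechanism does not work. If $T_i$ wagers $0$ (or more generally a ``neutral'' amount), steering bits against it accomplishes nothing, so you cannot force its liquid capital to zero. What the paper does instead is control the \emph{ratio} $t_i/m$: writing $t_i=q_i m + r_i$ with integer quotient $q_i$ and remainder $r_i<m$, one chooses outcome $1$ when $w_i\le q_i g^{+}$ and outcome $0$ when $w_i>q_i g^{+}$. A short calculation shows this makes $t_i/m$ nonincreasing, so $q_i$ eventually stabilizes and thereafter $r_i$ is nonincreasing and bounds all future savings of $T_i$. Crucially, each $0$-outcome (costing $M$ an amount $g^{+}$) forces a drop $r_i-r_i'\ge g^{+}$, so the total $M$-loss from $0$-outcomes is bounded by the initial $r_i$, and $\sum 2^{-g(n)}=\infty$ then gives $m\to\infty$. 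For part (b) this single-$T$ argument is run concurrently for all $T_i$ via a \emph{nested} hierarchy of divisions $m_0=m$, $m_i=m_{i-1}-c_i r_{i-1}-i$, where the $g$-timidity bound $|w_i|<(c_i-1)g^{+}$ is exactly what makes attention to $T_j$ for $j<i$ automatically decrease $r_i$ as well. Your dovetailing sketch has no analogue of this accounting, and your own observation that the tail $\sum 2^{-g(n)}$ is infinite already shows why periodic ``re-draining'' cannot bound the savings accrued between visits.
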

%
Since the sum of a bounded family of $g$-granular saving strategies is
a $g$-granular saving strategy, clause (a) of 
Theorem \ref{esRYf6OWn5} is a consequence of:
\begin{lem}
\label{fYSGaBvd6u}
If $g$ is nondecreasing and $\sum_{n} 2^{-g(n)}=\infty$,
there exists a $g$-timid martingale $M\leq_T g$ 
such that for every  $g$-granular supermartingale $T$
there exists $X$ with $\lim_n M(X\restr_n)=\infty$ and $\lim_n S_{T}(X\restr_n)<\infty$.
\end{lem}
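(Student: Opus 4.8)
The plan is to build $M$ as a slow, timid martingale that is guaranteed to succeed along some $X$ no matter what a given opponent $g$-granular supermartingale $T$ does, and then diagonalize against all such $T$. First I would set up the betting in \emph{blocks}: partition $\Nat$ into consecutive intervals $I_0, I_1, \dots$ on which $g$ is (essentially) constant, say $g\restr I_k \equiv g_k$, and on each block let $M$ wager exactly the granule $2^{-g_k}$ on every bit — this makes $M$ automatically $g$-timid. The key arithmetic fact driving the construction is that the total wagering capacity of $M$ across block $I_k$ is $|I_k|\cdot 2^{-g_k}$, and since $\sum_n 2^{-g(n)}=\infty$ we can choose the block lengths $|I_k|$ so that $\sum_k |I_k|\cdot 2^{-g_k}$ diverges as fast as we like; in particular $M$ can afford to make, within block $I_k$, a net gain of some prescribed amount $a_k$ (with $\sum_k a_k = \infty$) provided the bits of $X$ on $I_k$ cooperate — meaning $X$ agrees with $M$'s chosen favorable outcome on enough bits of the block. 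Because $M$ bets the same small amount each step, "enough agreement on the block" is a condition that leaves $M$ a great deal of freedom in which bits of the block it actually needs, and this freedom is what we spend on diagonalization.

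Next I would run the standard requirement-by-requirement diagonalization. List the $g$-granular supermartingales (from the fixed countable collection, or uniformly if we are building a single $M\leq_T g$ that defeats \emph{every} $g$-granular $T$ — here, since $T$ is universally quantified and $X$ may depend on $T$, it suffices to handle one $T$ at a time and then take $X$ depending on $T$; so really we must show: for each fixed $g$-granular $T$, the tree of possible plays of $M$ contains a branch $X$ with $M$ unbounded and $S_T$ bounded). Fix $T$. On block $I_k$, $M$ has already committed to betting $2^{-g_k}$ per step on a favorable side that it gets to choose bit-by-bit; I would have $M$ choose, at each step $n\in I_k$, the favorable outcome to be whichever of $0,1$ keeps $S_T$ from increasing — i.e. the outcome along which the cover $\widehat T$ does not exceed $T$, equivalently the outcome that is \emph{not} the direction in which $T$ is currently "saving". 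Since on any single bet a supermartingale's marginal saving $M^\ast(\sigma)$ is lost regardless of outcome, the right formulation is: we must steer $X$ so that $T$ has only finitely much \emph{total} marginal savings along $X$. The diagonalization against a single $T$ then reduces to: can $M$, while still guaranteeing its own capital grows by $\ge a_k$ on block $I_k$, also force $T$'s total marginal savings to stay bounded?

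The main obstacle — and the technical heart of the argument — is reconciling these two demands on each block. $M$ needs roughly $a_k 2^{g_k}$ of the bits of $I_k$ to go its way, out of $|I_k| \gg a_k 2^{g_k}$ bits; on the remaining bits of the block $M$ can "pass" (bet $0$, or bet and not care) so those bits are free for controlling $T$. On a free bit, $M$ sets $X$ to whichever value does not let $T$ pocket savings; a $g$-granular supermartingale can only make marginal savings in multiples of (a fixed multiple of) $2^{-g_k}$ at positions in $I_k$, and each time it does so on a \emph{large} scale it necessarily loses a correspondingly large chunk of its own capital, which caps how often it can afford to force the issue — a standard "$T$ can't keep saving without going broke" counting argument, bounding $\sum_k (\text{savings of }T\text{ on }I_k)$ by something like $T(\lambda)$ plus a convergent series. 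So I would (i) fix the block lengths large enough that the "free" bits in $I_k$ outnumber by far the bits $M$ needs, (ii) on $I_k$, first use free bits to neutralize any attempted savings by $T$ (pushing $X$ against $T$), charging each neutralization to $T$'s capital loss, (iii) use the committed bits to realize $M$'s gain $a_k$, choosing their values to again avoid feeding $T$, (iv) sum up: $M$'s capital $\to\infty$ since $\sum a_k=\infty$, while $T$'s total marginal savings along $X$ is finite since $T$'s available capital for "forced saves" is exhausted after a bounded total. Finally I would check the uniformity: $M$ depends only on $g$ (the block structure and the per-step wager are read off from $g$), and the branch $X$ is built recursively in $T$ and $g$, so the existence claim holds with $M\leq_T g$ as stated; boundedness of $S_T$ along $X$ is $\limsup_n S_T(X\restr_n)<\infty$, equivalently the stated $\lim_n S_T(X\restr_n)<\infty$ after noting $S_T$ is nondecreasing. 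I expect step (ii)–(iv), the simultaneous bookkeeping that keeps $M$'s gains and $T$'s losses on the same block consistent, to be where all the real care is needed.
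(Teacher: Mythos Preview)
There is a real gap in your plan, and it lives in steps (ii)--(iv). Your counting argument assumes, in effect, that the total amount $T$ can save along $X$ is bounded by something like $T(\lambda)$ plus a convergent correction, because ``$T$ can't keep saving without going broke.'' But this is false: $T$ replenishes its capital by winning bets, so its savings along $X$ are bounded only by its cover $\widehat T$, not by its initial capital. Concretely, take $T$ that simply copies $M$'s wagers (bet $g^{+}$ on outcome $1$ whenever $T\ge g^{+}$, else bet $0$) and saves $1$ every time its capital exceeds some fixed threshold. Since $T$ and $M$ place identical bets, $\widehat T(\sigma)=T(\lambda)+M(\sigma)-M(\lambda)$, so along any $X$ with $M(X\restr_n)\to\infty$ the cover $\widehat T(X\restr_n)\to\infty$ as well, the threshold is crossed infinitely often, and $S_T\to\infty$. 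Your ``free bits'' do not help here: a $0$-bit hurts $M$ and $T$ by exactly the same amount, so it cannot separate their fortunes, and your pre-committed block structure gives no adaptive mechanism for deciding when a $0$ is safe for $M$ but lethal for $T$. (A smaller point: marginal savings of a supermartingale are outcome-independent, as you note yourself, so ``pushing $X$ against $T$ to neutralize a save'' is not a meaningful operation; and savings need not be granular --- only wagers are.)

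The idea you are missing is precisely this adaptive separation. The paper's $M$ is essentially yours --- bet $g^{+}$ on outcome $1$ at every step --- but the choice of $X$ is governed by the \emph{ratio} $t/m$ of the two capitals: writing $q=\lfloor t/m\rfloor$ and $r=t-qm$, one plays outcome $1$ if $T$'s wager satisfies $w\le q\cdot g^{+}$ and outcome $0$ if $w>q\cdot g^{+}$. This single rule makes $t/m$ non-increasing along $X$, so the integer quotient $q$ eventually stabilises, and once it does the remainder $r$ is non-increasing. Every unit $T$ saves is then charged against $r$ (not against $T$'s total capital), bounding $S_T$ by a constant; meanwhile the $0$-outcomes, being exactly the steps where $r$ strictly drops by at least $g^{+}$, contribute only finite total decrement to $M$, so the divergence of $\sum_n 2^{-g(n)}$ pushes $M\to\infty$. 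In the copy-$M$ example above this rule drives $T$ to bankruptcy while $M$ survives, because once $T$ has saved enough that $q=0$, every further nonzero bet by $T$ triggers a $0$-outcome. None of this quotient/remainder structure is visible from a block-and-counting viewpoint.
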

The proofs of Theorem \ref{esRYf6OWn5} and Lemma \ref{fYSGaBvd6u} 
make heavy use of {\em divisions} that are not Euclidean in the strict sense as 
the numbers involved may not be integers. 
\begin{defi}[Division]\label{zY3U5A7rVM}
Given reals  $t,m$
such that $t\geq 0$, $m>0$ define the {\em quotient} $q$ of the division of $t$ by $m$ as the largest integer
such that $q\cdot m\leq t$, and define the {\em remainder} $r:=t- q\cdot m$, so that $0\leq r<m$. 
\end{defi}
The bulk of the present section is devoted to the proof of these results.
The final \eqref{B8qdW8Mt1} of \S\ref{1JyMSDjB2p} is formalized by:

\begin{thm}[Countable savings cover]\label{lL3i4uiQDZ}
If $g$ is nondecreasing, unbounded and $M$ is $g$-timid betting strategy,
there exists a countable family
$T_i, i\in\Nat$ of saving strategies such that:
\begin{enumerate}[\hspace{0.5cm}(a)]
\item $(T_i)$ is computable from $M$ with wagers  integer multiples of the corresponding wagers of $M$;
\item for each $X$ where $\limsup_n M(X\restr_n)=\infty$ there exists $i\in\Nat$ such that 
$\lim_n S_{T_i} (X\restr_n)=\infty$.
\end{enumerate}
Hence for any $X$ where $M$ is successful, at least one of the $T_i$ saves successfully along $X$.
\end{thm}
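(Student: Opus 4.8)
The plan is to split the set of sequences $X$ on which $M$ is successful according to whether $\liminf_n M(X\restr_n)$ is finite or infinite, to build a countable family of saving strategies covering each of the two classes, and to let $(T_i)$ be the union of the two families (re-indexed by $\Nat$). Every strategy produced will be obtained from $M$ by replacing, at each position, the wager $w_M(\sigma)$ by a non-negative integer multiple of itself, so clause (a) holds automatically; Lemma \ref{nXAZQtLDNS} can be applied at the end to normalise the savings to integers.

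For $X$ with $\liminf_n M(X\restr_n) < \infty$ nothing new is required: the family $(N_q)_{q\in\Rat^+}$ recorded in \eqref{ok15FgmIQl} (extracted from the proof of Proposition \ref{BrReKBzoKL}) already satisfies $\lim_n S_{N_q}(X\restr_n) = \infty$ whenever $M$ is successful on $X$ and $\liminf_n M(X\restr_n) < q$, and every such $X$ admits a rational $q$ above its liminf. Moreover in the construction behind Proposition \ref{BrReKBzoKL} the strategy $N_q$ places, at each position, either no bet or precisely $M$'s wager, and the savings modification inserts only marginal savings, which leave the wagers untouched; hence each wager of $N_q$ is a $\{0,1\}$-multiple of $M$'s, and $(N_q)$ is computable from $M$.

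The substantial part is the class of $X$ with $\lim_n M(X\restr_n) = \infty$. Here one cannot merely ride $M$: a strategy copying $M$'s wagers with a fixed cushion has cover equal to $M$ plus a constant, and since $M$ is a genuine martingale betting non-trivially it drops arbitrarily close to $0$ along adverse branches, so its cover has bounded infimum over future positions, which caps the achievable savings. I would instead construct a countable family of \emph{bold} saving strategies indexed by a rational threshold (and, if necessary, a rational rate): the strategy attached to threshold $q$ bets nothing until $M$ first exceeds $q$, then places large integer multiples of $M$'s wagers in $M$'s favoured direction — large enough that its cover grows strictly faster than $M$ along any play on which $M$ wins — and freezes (returns to betting nothing) whenever $M$ is pulled below $q$, which on adverse branches eventually occurs. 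Such multipliers exist because $M$ is $g$-timid, so the admissible wagers are fine relative to the strategy's accumulated capital; the non-Euclidean division of Definition \ref{zY3U5A7rVM} converts the intended real wager sizes into integer multiples of $w_M(\sigma)$ with controlled rounding error. The strategy locks away capital (makes marginal savings) exactly at those positions where the amplified cover is guaranteed not to crash below the new level on any future branch, and one checks that, for each $X$ with $\lim_n M(X\restr_n) = \infty$, choosing the threshold below the asymptotic floor of $M$ along $X$ produces a strategy whose cover, and hence whose accumulated savings, diverges along $X$.

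The main obstacle is precisely this second construction: arranging that the cover of the bold strategy grows without bound along the target $X$ \emph{while} its infimum over all future positions also tends to infinity (so that an unbounded amount can safely be locked away), using only integer multiples of $M$'s wagers. The naive amplifications fail for a structural reason — if the strategy bets in $M$'s direction a fraction of its capital that is a function of the current capital and of $M$'s value, then it tracks $M$ and returns to a fixed value whenever $M$ does, keeping the future-infimum of its cover bounded; conversely, if the total fractional exposure is summable the future-infimum stays large but the cover itself is then bounded along $X$. Threading between these, and assembling everything into one countable family uniformly computable from $M$ that captures every successful $X$, is the heart of the proof; the lower bounds of Theorem \ref{esRYf6OWn5} confirm that this cannot be done with strategies that are timid or of bounded total initial capital.
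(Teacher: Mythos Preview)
Your reduction of the $\liminf_n M(X\restr_n)<\infty$ case to the family $(N_q)$ recorded in \eqref{ok15FgmIQl} is correct and efficient; the paper does not use this split, but it is a legitimate simplification of the task.

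The genuine gap is the case $\lim_n M(X\restr_n)=\infty$. You propose a family of bold strategies that wait for $M$ to exceed a threshold $q$ and then bet large integer multiples of $w_M$ \emph{in $M$'s direction}, freezing when $M$ drops back below $q$. You then correctly diagnose why this does not obviously work: any strategy whose exposure is a function of its own capital and of $M$'s value is pulled back to a fixed level whenever $M$ is, so the future infimum of its cover --- and hence the amount it can safely lock away --- stays bounded. Your proposal does not overcome this obstacle; you describe the difficulty but do not supply a mechanism that threads between the two failure modes you identify.

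The paper's missing idea is \emph{hedging against $M$}, not amplifying it. One main strategy $T$ bets $2w_M$ in $M$'s direction during active intervals (``sub-cycles''), while a backup $N_\rho$ bets $-\lceil 2/c\rceil\, w_M$, i.e.\ in the \emph{opposite} direction with an amplified wager. The running difference $r=T-M$ moves with $M$'s local gains and losses; a sub-cycle terminates when $r$ exits the interval $(c/2,\,c+1)$. If $r$ exits upward ($M$ won locally) then $T$ saves $1$ and a fresh backup is started; if $r$ exits downward ($M$ lost locally) then $N_\rho$ has gained at least $\lceil 2/c\rceil\cdot c/2\ge 1$ from its contrary bets and saves $1$, the marker $c$ is halved, and the same $N_\rho$ continues. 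In neutral intervals both $T$ and the active $N_\rho$ copy $M$ exactly, so $N_\rho$ recovers with $M$ until the next sub-cycle can be triggered. Along any $X$ with $\limsup_n M(X\restr_n)=\infty$ one shows there are infinitely many sub-cycles; hence either $T$ saves infinitely often, or the index $\rho$ eventually stabilises and that $N_\rho$ saves infinitely often. The countable family is $\{T\}\cup\{N_\rho:\rho\in I\}$, and every wager is a $\{0,1,2,-\lceil 2/c\rceil\}$-multiple of $w_M$. The crucial point you are missing is that at least one member of the family bets \emph{against} $M$, so that local losses of $M$ --- which are precisely what prevent your same-direction strategies from saving --- are converted into savings for some member of the cover.
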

The proof of Theorem \ref{lL3i4uiQDZ} is more involved and 
is given in \S\ref{iPgHQCWM76}. It is instructive to exemplify the connections between our results
and the saving paradox of \citet{Teu14agCWGnp}  more formally than we did in \S\ref{1JyMSDjB2p}.

{\em Comparing \citet{Teu14agCWGnp} and  Theorems \ref{esRYf6OWn5} \& \ref{lL3i4uiQDZ}.}
Note that Teutsch's result: 
\begin{enumerate}[\hspace{0.5cm}(i)]
\item  concerns integer-valued martingales which, in our notation, means that
 $\forall n,\ g(n)=0$;
\item states that there exists an integer-valued betting strategy $M$ such that 
for any countable family $(T_i)$ of integer-valued saving strategies, $\exists\ X\in\twome$ where
$M$ succeeds but also each $T_i$ fails to save on $X$.
\item his martingale $M$ mentioned in (ii) wagers at most 1 at each round, so in our terminology it is $g$-timid,
where $g$ is the constant zero function.
\end{enumerate}
Hence Teutsch's result implies the special case of Theorem \ref{esRYf6OWn5} where $\forall n\ g(n)=0$.
On the other hand, assuming $\forall n\ g(n)=0$, Theorem \ref{esRYf6OWn5} does not imply
Teutsch's result since neither (a) nor (b) of Theorem \ref{esRYf6OWn5} is required in Teutsch's result.
With regard to
Theorem \ref{lL3i4uiQDZ}, the hypothesis that $g$ is unbounded means that
it does not concern integer valued strategies. Furthermore 
the statement of Theorem \ref{lL3i4uiQDZ} without the hypothesis that 
$g$ is unbounded (which would make it applicable to the integer valued case)
would imply the negation of Teutsch's result, due to clause (iii) above.
Incidentally, the hypothesis in Theorem \ref{lL3i4uiQDZ} that $g$ is unbounded 
is used in establishing a universal vanishing non-decreasing upper bound $h$ for the wagers of $M$,
as shown in Table \ref{iHCIFQhir}, which is crucial in the argument of \S\ref{iPgHQCWM76}.
 
\subsection{Savings under fine granularity: proof of Theorem \ref{KOywTgVWA}}\label{T3Bi5M5HqJ}
Given nondecreasing $g$ suppose that 
$G$ be an integer strict upper bound of $2 + \sum_{n\in \Nat} 2^{-g(n)}$, let $M$ be
a supermartingale, and recall the statement of of Theorem \ref{KOywTgVWA}:
\begin{equation*}
\parbox{13cm}{there exists a $g$-granular supermartingale 
$N$, computable from $g,M$, such that $\lim_n S_N(X\restr_n)=\infty$ for each $X$ such that 
$\limsup_n M(X\restr_n)=\infty$.}
\end{equation*}
Without loss of generality, we assume $M(\sigma) \geq 1$ for all $\sigma \in \twolo$, because otherwise we may
use $M+1$ instead of $M$ in the argument.
We define the required supermartingale $N$ and its cover $\widehat{N}$ simultaneously, following a 
granular version of the
savings argument we used to justify \eqref{DxAECJj5D3}.
For any $\sigma\in \twolo$, 
consider the finite sequence $n_i, i<k$ defined inductively as follows:
$n_0 = 0$, and for each $i$ let $n_{i+1}$ be the least number (if such exists) such that  
$n_i<n_{i+1}$ and
$M(\sigma \restr_{n_{i+1}}) \ge 2 M(\sigma \restr_{n_i})$ for each $i$.
Then $k$ is the least number $i$ such that $n_i$ is undefined, and 
we may let $I(\sigma):=\{n_i\ |\ i<k\}$ and $l(\sigma ) := |I(\sigma)| - 1$.
We define $N,\widehat{N}$ by induction.

First, let $\widehat{N}(\lambda) = N(\lambda) = \left\lfloor G\cdot M(\lambda) \right\rfloor + 1$.
Then for each
$\sigma\in \twolo $, and each real $x$ let
$\intg(\sigma,x)$ be the largest integer multiple of $2^{-g(|\sigma|+1)}$ which is at most $|x|$, multiplied by
the sign of $x$. It follows that 
\begin{equation}\label{dHLi3lB5hJ}
\big|\ \intg(\sigma,x)-x\ \big |< 2^{-g(|\sigma|+1)}
\hspace{0.5cm}\textrm{for each $\sigma,x$.}
\end{equation}
We define the wager for $N$ on $\sigma$, based on the wager of $M$, but scaled by the fraction
$(\widehat{N}(\sigma) - l(\sigma))/M(\sigma)$ and rounded to the nearest granular value:
\begin{equation}\label{AffKOnFN3i}
w_N(\sigma) = \intg\left(\sigma, 
\frac{w_M(\sigma)\cdot \Big(\widehat{N}(\sigma) - l(\sigma)\Big)}{M(\sigma)}\right)
\end{equation}
as well as the values of $\widehat{N}(\sigma\ast i), N(\sigma\ast i)$ recursively, in terms of the values at $\sigma$:
\begin{equation}\label{FeiupPPNF2}
\parbox{11cm}{
$\widehat{N}(\sigma\ast 1) = \widehat{N}(\sigma)+ w_N(\sigma)$ 
\quad \ and \quad $\widehat{N}(\sigma\ast 0) = \widehat{N}(\sigma)- w_N(\sigma)$\\[0.2cm]
$N(\sigma\ast 1) = \widehat{N}(\sigma\ast 1)- l(\sigma)$  \quad and \quad 
$N(\sigma\ast 0) = \widehat{N}(\sigma\ast 0)- l(\sigma)$.}
\end{equation}
Clearly $N$ is $g$-granular and computable from $M$.
The intuition for the definition of $N$ is the same as the intuition in the argument 
for \eqref{DxAECJj5D3} that we discussed above, but
adapted to $g$-granular values. It remains to show that $\widehat{N}$ is the cover of $N$ and
$\lim_n S_N(X\restr_n)=\infty$ for each $X$ such that 
$\limsup_n M(X\restr_n)=\infty$.

\begin{lem}[Growth of $\widehat{N}$]\label{lem:tmg1}
For all $\sigma \in \twolo$, 
$M(\sigma)+l(\sigma) <\widehat{N}(\sigma)$.	
\end{lem}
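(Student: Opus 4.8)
The plan is to prove Lemma \ref{lem:tmg1} by induction on $|\sigma|$, tracking the ratio between $\widehat{N}$ and $M$ along the "doubling blocks" determined by the set $I(\sigma)$. The base case is immediate from the choice $\widehat{N}(\lambda) = \lfloor G\cdot M(\lambda)\rfloor + 1$: here $l(\lambda) = 0$ and $\widehat{N}(\lambda) > (G-1)M(\lambda) \geq (1+\sum_n 2^{-g(n)})M(\lambda) \geq M(\lambda) + \sum_n 2^{-g(n)}$, which is comfortably more than $M(\lambda) + l(\lambda) = M(\lambda)$, with slack to spare for the induction.

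For the inductive step at $\sigma \ast i$, I would distinguish whether $|\sigma|+1 \in I(\sigma \ast i)$, i.e.\ whether a new doubling point is created. The key quantitative claim I expect to maintain is something like: for each $\sigma$, $\widehat{N}(\sigma) \geq M(\sigma) + l(\sigma) + \bigl(G - 2 - l(\sigma)\bigr)\cdot M(\sigma)/M(\sigma\restr_{n_{l(\sigma)}})$ or, more cleanly, that $\widehat{N}(\sigma) - l(\sigma)$ divided by $M(\sigma)$ stays bounded below by a quantity that decreases by a controlled amount each time $l$ increments. Between consecutive doubling points the scaling in \eqref{AffKOnFN3i} keeps the ratio $(\widehat{N}(\sigma) - l(\sigma))/M(\sigma)$ \emph{almost} invariant — exactly invariant in the unrounded savings trick of \S\ref{W2nKOLMi5}, but here perturbed by the rounding error \eqref{dHLi3lB5hJ} of size $< 2^{-g(|\sigma|+1)}$ at each step. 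Summing these perturbations over all stages contributes at most $\sum_n 2^{-g(n)} < G-2$ total drift in $\widehat{N}$ relative to the perfectly-scaled cover, which is precisely why the hypothesis $\sum_n 2^{-g(n)} < G - 2$ (equivalently the strict bound on $G$ stated at the top of \S\ref{T3Bi5M5HqJ}) is the right budget.

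So the argument is: the cover $\widehat{N}$ started a multiplicative factor of roughly $G$ above $M$; each crossing of a doubling point costs an additive $1$ in the gap $\widehat{N} - M$ (this is the "save a unit" step, reflected in the $-l(\sigma)$ terms of \eqref{FeiupPPNF2}) but, because $M$ has itself doubled, the \emph{available headroom} $\widehat{N} - l - M$ only needs to survive finitely many such subtractions before $M$'s growth swamps it — and the rounding losses, being summable by fineness, never eat more than the initial budget of $G-2$ units. One has to be a little careful that $w_N(\sigma)$ as defined never drives $N(\sigma \ast i)$ or $\widehat{N}(\sigma \ast i)$ negative; this follows once the lemma's inequality $M < \widehat{N} - l$ is in hand, since the scaled-and-rounded wager \eqref{AffKOnFN3i} has absolute value at most $(\widehat{N}(\sigma) - l(\sigma))/M(\sigma) \cdot |w_M(\sigma)| < \widehat{N}(\sigma) - l(\sigma)$, keeping $\widehat{N}(\sigma \ast i) \geq 0$ and $N(\sigma\ast i) = \widehat{N}(\sigma \ast i) - l(\sigma) \geq M(\sigma \ast i) \geq 0$ — so the lemma and the well-definedness of $N$ as a supermartingale are proved together in one induction.

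The main obstacle I anticipate is bookkeeping the rounding error correctly across a block: within $[n_j, n_{j+1})$ one must show the ratio $(\widehat{N} - l)/M$ drops by at most $\sum_{n=n_j}^{n_{j+1}-1} 2^{-g(n+1)}/M(\sigma\restr_n) \le \sum_{n \ge n_j} 2^{-g(n+1)}$ (using $M \ge 1$), and then that $l$ incrementing by one at $n_{j+1}$ is absorbed because $M$ has doubled, so the quantity $\widehat{N}(\sigma) - l(\sigma) - M(\sigma)$ stays positive provided the running total of rounding errors plus the "discretization of $l$" never exceeds the initial surplus $\widehat{N}(\lambda) - M(\lambda) > (G-1)M(\lambda) \ge G - 1 > 1 + \sum_n 2^{-g(n)}$. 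Pinning down the exact invariant that makes this telescoping go through cleanly — rather than the slightly-too-weak statement of the lemma itself — is where the real care is needed; the lemma as displayed is then just the surviving consequence of that stronger invariant.
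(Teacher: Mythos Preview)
Your proposal is correct and takes essentially the same approach as the paper: both track the ratio $(\widehat{N}(\sigma)-l(\sigma))/M(\sigma)$ as the strengthened induction invariant, bound its per-step drop by the rounding error $2^{-g(|\sigma|+1)}$ via \eqref{dHLi3lB5hJ}, and absorb each increment of $l$ using the fact that $M$ has at least doubled (the paper makes this last point precise as $1/M(\sigma\ast i)\le 2^{-l(\sigma)-1}$, yielding a geometric series summing to at most $1$, which is the one detail your sketch leaves implicit).
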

\begin{proof}
By \eqref{dHLi3lB5hJ} and the definition of $w_N(\sigma)$ we have that for any $\sigma\in \twolo$:
\begin{equation*}
\Big| w_N(\sigma) -w_M(\sigma) \cdot \frac{\widehat{N}(\sigma) - l(\sigma)}{M(\sigma)} \Big|\leq 2^{-g(|\sigma|+1)}.
\end{equation*}
Then 
\[
\frac{\widehat{N}(\sigma\ast 1) - l(\sigma)}{M(\sigma\ast 1)}  
\ge \frac{\widehat{N}(\sigma) + w_M(\sigma) \cdot \frac{\widehat{N}(\sigma) - l(\sigma)}{M(\sigma)} 
- 2^{-g(|\sigma | +1)} - l(\sigma)}{M(\sigma) + w_M(\sigma)} 
= \frac{\widehat{N}(\sigma) - l(\sigma)}{M(\sigma)} - \frac{ 2^{-g(|\sigma | +1)}}{M(\sigma\ast 1)} 
\]
so
\begin{equation}\label{4gSAYexT49}
\frac{\widehat{N}(\sigma\ast i) - l(\sigma)}{M(\sigma\ast i)}\geq 
\frac{\widehat{N}(\sigma) - l(\sigma)}{M(\sigma)} - 2^{-g(|\sigma | +1)}
\end{equation}
for $i=1$. Similarly, under outcome 0 we have:
\[
\frac{\widehat{N}(\sigma\ast 0) - l(\sigma)}{M(\sigma\ast 0)}  
\ge \frac{\widehat{N}(\sigma) - w_M(\sigma) \cdot \frac{\widehat{N}(\sigma) - l(\sigma)}{M(\sigma)} 
- 2^{-g(|\sigma | +1)} - l(\sigma)}{M(\sigma) + w_M(\sigma)} 
= \frac{\widehat{N}(\sigma) - l(\sigma)}{M(\sigma)} - \frac{ 2^{-g(|\sigma | +1)}}{M(\sigma\ast 0)} 
\]
so \eqref{4gSAYexT49} also holds for $i=0$.
For $i\in \{ 0, 1\} $, if $I(\sigma\ast i) = I(\sigma)$, then we have $l(\sigma\ast i) = l(\sigma)$ and
\eqref{4gSAYexT49} gives:
\begin{equation}\label{EQ8jHLpurm}
\frac{\widehat{N}(\sigma\ast i) - l(\sigma\ast i)}{M(\sigma\ast i)} 
\ge \frac{\widehat{N}(\sigma) - l(\sigma)}{M(\sigma)} - 2^{-g(|\sigma\ast i|)}.
\end{equation}
If $I(\sigma\ast i) \neq I(\sigma)$, then 
$|\sigma\ast i | \in I(\sigma\ast i),\ l(\sigma\ast i) = l(\sigma)+1$ and
$M(\sigma\ast i) \ge 2^{l(\sigma\ast i)}\cdot M(\lambda)$.
Combining these facts with \eqref{4gSAYexT49}, we get:
\begin{equation}\label{hkgAzLLXbB}
\frac{\widehat{N}(\sigma\ast i) - l(\sigma\ast i)}{M(\sigma\ast i)} = 
\frac{\widehat{N}(\sigma\ast i) - l(\sigma)}{M(\sigma\ast i)} - \frac1{M(\sigma\ast i)}
\ge \frac{\widehat{N}(\sigma) - l(\sigma)}{M(\sigma)} - 2^{-g(|\sigma\ast i|)} - 2^{-l(\sigma)-1}.
\end{equation}
Inductively applying \eqref{EQ8jHLpurm} and \eqref{hkgAzLLXbB} for the cases  
$I(\sigma\ast i) = I(\sigma)$ or
$I(\sigma\ast i) \neq I(\sigma)$ respectively, we get:
\[
\frac{\widehat{N}(\sigma) - l(\sigma)}{M(\sigma)} 
\ge \frac{\widehat{N}(\lambda) - l(\lambda)}{M(\lambda)} - 
\sum_{n=1}^{|\sigma|} 2^{-g(n)} - \sum_{n=1}^{l(\sigma)} 2^{-n}
\ge G - \sum_{n\in \Nat} 2^{-g(n)} - 1 > 1
\]
which gives the required inequality.
\end{proof}

\begin{lem}[Properties of $N$ and $\widehat{N}$]
The function $\widehat{N}$ is a $g$-granular martingale and 
$N$ is a $g$-granular supermartingale; 
moreover $\widehat{N}$ is the cover of $N$.
\end{lem}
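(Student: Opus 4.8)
The plan is to verify all three assertions directly from the recursions \eqref{AffKOnFN3i}--\eqref{FeiupPPNF2}, using Lemma \ref{lem:tmg1} for the non-negativity bookkeeping and a brief prefix-monotonicity observation about $l$.

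First I would check that $\widehat{N}$ is a martingale: summing the two equations in the first line of \eqref{FeiupPPNF2} gives $\widehat{N}(\sigma\ast 0)+\widehat{N}(\sigma\ast 1)=2\widehat{N}(\sigma)$, so the martingale identity holds, and non-negativity is immediate from Lemma \ref{lem:tmg1}, since $\widehat{N}(\sigma)>M(\sigma)+l(\sigma)\ge 1>0$ under the running assumption $M\ge 1$. For $g$-granularity of the function $\widehat{N}$ I would induct on $|\sigma|$: $\widehat{N}(\lambda)=\lfloor G\cdot M(\lambda)\rfloor+1$ is an integer, hence a multiple of $2^{-g(0)}$; and if $\widehat{N}(\sigma)$ is a multiple of $2^{-g(|\sigma|)}$, then --- because $g$ is non-decreasing, so $2^{-g(|\sigma|)}$ is a multiple of $2^{-g(|\sigma|+1)}$, while $w_N(\sigma)$ is a multiple of $2^{-g(|\sigma|+1)}$ by the definition of $\intg$ --- the values $\widehat{N}(\sigma\ast i)=\widehat{N}(\sigma)\pm w_N(\sigma)$ are multiples of $2^{-g(|\sigma\ast i|)}$. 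In particular the wagers $w_N(\sigma)$ are integer multiples of $2^{-g(|\sigma|+1)}$, so $\widehat{N}$ is a $g$-granular martingale.

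Next I would treat $N$. The one non-formal ingredient is that $l$ is non-decreasing along prefixes: if $\tau$ is a prefix of $\sigma$ then $M(\tau\restr_n)=M(\sigma\restr_n)$ for all $n\le|\tau|$, so the sequence $(n_i)$ (hence the set $I$) defining $l(\tau)$ is an initial segment of the one defining $l(\sigma)$, whence $l(\tau)\le l(\sigma)$; also $l(\lambda)=0$. Consequently $S_N(\sigma)=\widehat{N}(\sigma)-N(\sigma)$ is $0$ at the root and $l(\tau)$ when $\sigma=\tau\ast j$, so $0\le S_N(\sigma)\le l(\sigma)$ always. Then \eqref{FeiupPPNF2} yields $N(\sigma\ast 0)+N(\sigma\ast 1)=2\widehat{N}(\sigma)-2l(\sigma)\le 2\widehat{N}(\sigma)-2S_N(\sigma)=2N(\sigma)$, which is the supermartingale inequality; and for a node $\sigma\ast i$ we have $N(\sigma\ast i)=\widehat{N}(\sigma\ast i)-l(\sigma)>M(\sigma\ast i)+l(\sigma\ast i)-l(\sigma)\ge M(\sigma\ast i)>0$ by Lemma \ref{lem:tmg1} and $l(\sigma\ast i)\ge l(\sigma)$, while $N(\lambda)=\widehat{N}(\lambda)>0$, so $N$ is non-negative. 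Since $l(\sigma)\in\Nat$ and $\widehat{N}$ is $g$-granular as a function, so is $N$, and a fortiori its wagers are integer multiples of $2^{-g(|\sigma|+1)}$.

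Finally, for ``$\widehat{N}$ is the cover of $N$'', I would recall that the cover is characterized by having the same initial capital and the same wagers as $N$: by construction $\widehat{N}(\lambda)=N(\lambda)$, and both $w_{\widehat{N}}(\sigma)$ and the actual wager $\big(N(\sigma\ast 1)-N(\sigma\ast 0)\big)/2$ of $N$ equal $\big(\widehat{N}(\sigma\ast 1)-\widehat{N}(\sigma\ast 0)\big)/2=w_N(\sigma)$, because the common term $-l(\sigma)$ cancels in the difference $N(\sigma\ast 1)-N(\sigma\ast 0)$. I expect the only mildly delicate point to be the prefix-monotonicity of $l$ and the ensuing bound $S_N(\sigma)\le l(\sigma)$ that drives the supermartingale inequality; everything else is substitution into the defining recursions and an appeal to Lemma \ref{lem:tmg1}.
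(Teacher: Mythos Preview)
Your proposal is correct and follows essentially the same route as the paper: both use the recursions \eqref{FeiupPPNF2} together with the prefix-monotonicity $l(\sigma)\le l(\sigma\ast i)$ to get the supermartingale inequality and non-negativity (the latter via Lemma \ref{lem:tmg1}), and both read off $g$-granularity from the definition of $w_N$ via $\intg$. Your treatment is a bit more explicit---you phrase the supermartingale step through the bound $S_N(\sigma)\le l(\sigma)$ and verify granularity of $\widehat{N}$ as a function by induction---whereas the paper simply cites \eqref{FeiupPPNF2} and the granularity of $w_N$; but the underlying argument is the same.
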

\begin{proof}
By the equations \eqref{FeiupPPNF2} in the definition of $N,\widehat{N}$ 
and since $l(\sigma\ast i) \ge l(\sigma)$ for $\ i\in \{ 0, 1\}$, we have 
$N(\sigma\ast i) = \widehat{N}(\sigma\ast i)- l(\sigma) \ge \widehat{N}(\sigma\ast i)- l(\sigma\ast i) > 0$,
which also shows that $\widehat{N}(\sigma)>0$ for all $\sigma$.
Given this fact, the equations \eqref{FeiupPPNF2} and the definition of the wager of a (super)martingale from 
\S \ref{W2nKOLMi5}
we get that:
\begin{itemize}
\item $N(\sigma\ast 1)+N(\sigma\ast 0)\leq 2N(\sigma)$ so $N$ is a supermartingale; 
\item $\widehat{N}$ is a martingale, $N(\sigma)\leq \widehat{N}(\sigma)$ for all $\sigma$ and
$N$, $\widehat{N}$ have the same wager $w_N$ given by \eqref{AffKOnFN3i};
\end{itemize}
Hence $\widehat{N}$ is the cover of $N$.
By  \eqref{AffKOnFN3i} the function $w_N$ is $g$-granular, so 
$\widehat{N},N$ are $g$-granular.
\end{proof}
Finally we verify that $N$ has the desired property. Suppose that 
$\limsup_n M(X\restr_n)=\infty$. 
It follows that $\lim_n l(X\restr_n)=\infty$. On the other hand
$N(X\restr_{n+1}) = \widehat{N}(X\restr_{n+1})- l(X\restr_n)$
and $N(X\restr_{n+1})>0$ for each $n$, so 
\[
\lim_n S_N(X\restr_{n+1})=
\lim_n \Big(\widehat{N}(X\restr_{n+1})-N(X\restr_{n+1})\Big)=
\lim_n l(X\restr_n)=\infty
\]
which concludes the proof of Theorem \ref{KOywTgVWA}.

\subsection{Strategy, notation, and outcome sequence: proof of Lemma \ref{fYSGaBvd6u}}\label{AVVUpDcdaX}
We view a saving strategy $T$ or supermartingale,
as a stochastic process on the underlying product space of binary outcomes.
We may also split each stage into
a {\em savings step},
 where $T$ can decrease producing a marginal saving, and a subsequent
{\em betting step}, when $T$ places a bet and the outcome is revealed 
(modifying the capital of $T$ accordingly).
%
In order to avoid overloaded notation,  in the following arguments we
specify a stage in the process and one of its steps 
(saving or betting) and talk about
the process $t$ (corresponding to $T$) 
at the beginning of the stage and step in question, denoting by $t'$ the capital
after the step has been completed (which is also the capital at the beginning of the next step).
Variables $g^{\ast},g^{+}$ denote the
{\em current granule} and the {\em next granule} of a stage in the game. Formally these
are the functions $\sigma\mapsto 2^{-g(|\sigma|)}, \sigma\mapsto 2^{-g(|\sigma|+1)}$ respectively,
determining the size of the required divisor of the current capital and wager respectively at each stage, based
on the function $g$.
Recall the generalized notions of  {\em quotient} and {\em remainder} in Definition \ref{zY3U5A7rVM}.
We let $w$ be the random variable corresponding to $\sigma\mapsto w_T(\sigma)$
and let $q,r$ be the quotient and remainder of the division of $t$ by $m$, with 
$q', r'$ denoting their values after a step in the process has been completed. Hence
$t=q\cdot m+r$, where $r<m$
and $q$ is an non-negative integer.  
These notational conventions are summarized in
Table \ref{ABuEidRZu} 
where $\mathbb{Z}^{+}$ denotes the set of non-negative integers and $g^{\ast}\cdot\mathbb{Z}^{+}$
the set of non-negative integer multiples of $g^{\ast}$. 

\begin{table}
\colorbox{black!5}{\arrayrulecolor{white!20!black} 
  \begin{tabular}{lcc}
{\small\em Notion} &{\small\em Variable} & {\small\em Value at  $s$}  \\[0.1ex]\cmidrule[0.5pt]{1-3}
{\small Current granule} &{\small $g^{\ast}$} 	 & {\small $2^{-g(s)}$ } \\[0.5ex]
 {\small Next granule} &{\small $g^{+}$}  &  {\small $2^{-g(s+1)}$}\\[0.5ex]
{\small Wager of $T$} &{\small $w$} & {\small  $\in g^{+}\cdot \mathbb{Z}$}\\[0.5ex]
{\small Next outcome } &{\small $x$} & {\small binary}\\[0.5ex]
\end{tabular}}\centering\hspace{0.2cm}
\colorbox{black!5}{\arrayrulecolor{white!20!black} 
\begin{tabular}{lccc}
{\small\em Notion}&{\small\em Before / After} &\hspace{0.2cm}  &  {\small\em Type}  \\[0.1ex]\cmidrule[0.5pt]{1-4}
{\small Capital of $T$} &{\small $t,t'$} &	\hspace{0.2cm}  & {\small $g^{\ast}\cdot \mathbb{Z}^{+}$} \\[0.5ex]
 {\small Capital of  $M$} &{\small $m,m'$}	& \hspace{0.2cm}  &  {\small $g^{\ast}\cdot \mathbb{Z}^{+}$}\\[0.5ex]
 {\small Quotient $\floor{t/m}$} &{\small $q,q'$} & \hspace{0.2cm} & {\small $\mathbb{Z}^{+}$}\\[0.5ex]
 {\small Remainder $t/m-\floor{t/m}$} &{\small $r,r'$} & \hspace{0.2cm} & {\small $g^{\ast}\cdot \mathbb{Z}^{+}$}\\[0.5ex]
\end{tabular}}\centering
\caption{Parameters for the proof of Lemma \ref{fYSGaBvd6u}}\label{ABuEidRZu}
\end{table}


Recall the statement of Lemma \ref{fYSGaBvd6u}, which assumes that 
$g$ is nondecreasing:
\begin{equation*}
\parbox{13.5cm}{if $\sum_{n} 2^{-g(n)}=\infty$,
there exists a $g$-timid martingale $M\leq_T g$ 
such that for every  $g$-granular supermartingale $T$
there exists $X$ with $\lim_n M(X\restr_n)=\infty$ and $\lim_n S_{T}(X\restr_n)<\infty$.}
\end{equation*}
The martingale $M$ of Lemma \ref{fYSGaBvd6u} is the strategy that starts with capital $2^{-g(0)}$ and
at each stage $n+1$ bets $2^{-g(n+1)}$ on outcome 1 (unless
its current capital is less than this, in which case it does not bet). Formally,
$M(\sigma\ast j) =M(\sigma) + (-1)^{j+1}\cdot 2^{-g(|\sigma|+1)}$ if
$M(\sigma) \ge 2^{-g(|\sigma|+1)}$, and $M(\sigma)$ otherwise.
Using our simplified notation, $M$ can be written as the process $m$ which at every betting step
is determined by:
\begin{equation*}
m'=\left\{\begin{array}{ll}
m + g^{+}&\textrm{if outcome is 1 and $m\geq g^{+}$;}\\
m - g^{+}&\textrm{if outcome is 0 and $m\geq g^{+}$;}
\end{array}\right\}.\ \ \ 
\textrm{If $m< g^{+}$ then $m'=m$.} 
\end{equation*}
%

{\bf Analysis of transitions and outcome sequence $X$.}
By Lemma \ref{nXAZQtLDNS} we may assume that 
$\sigma\mapsto T(\sigma)$ is $g$-granular.
For Lemma \ref{fYSGaBvd6u} we wish to construct an infinite sequence of outcomes $X$
along which $m$ diverges to infinity (\ie $M(X\restr_n)\to\infty$ as $n\to\infty$)
while the given $t$ accumulates a finite amount of savings along $X$ (\ie $\lim_n S_T(X\restr_n)$
is finite). To this end we ensure that along $X$ the ratio
$t/m$ is non-increasing, \ie
\begin{equation}\label{Z1i12nlwpS}
t'/m'\leq t/m
\hspace{0.7cm}\textrm{at each saving or betting step  along $X$.}
\end{equation}
%
For a saving step, \eqref{Z1i12nlwpS} follows from the fact that 
$m'=m$ and $t'\leq t$. For betting steps along $X$,
we have to choose the outcomes appropriately.
Under the 1-outcome, $t'=t+w$ and $m'=m+g^{+}$, so it suffices that
\[
(t+w)/(m+g^{+})\leq t/m\Leftrightarrow
w \leq g^{+}\cdot q + g^{+}\cdot r/m
\Leftrightarrow
w \leq g^{+}\cdot q 
\]
since $r<m$, $q$ is an integer
and $w$ is an integer multiple of $g^{+}$.
Under 0-outcome we have $t'=t-w$ and $m'=m-g^{+}$, so
it suffices that
$(t-w)/(m-g^{+})< t/m\Leftrightarrow
w > g^{+}\cdot q + g^{+}\cdot r/m
\Leftrightarrow w > g^{+}\cdot q$, 
%
and note that we need strict inequality in the 0-outcome in order to get the equivalence.
Hence
\begin{equation}\label{dWqsjR6bsC}
\parbox{13cm}{if we follow the rule {\em 1-outcome if 
$w \leq g^{+}\cdot q'$ and 0-outcome if $w > g^{+}\cdot q'$}
then $t'/m'\leq t/m$ at each step/stage;
in the case of 0-outcome in a betting step, $t'/m'< t/m$.}
\end{equation}
Based on \eqref{dWqsjR6bsC}, let $x$ be the outcome chosen at each betting step 
(where $t, m,g^{+},q$ are defined) and define:
\begin{equation}\label{1XGEpgJS}
\textrm{$x =1$ if $w \leq g^{+}\cdot q$; 
\hspace{0.3cm} and \hspace{0.3cm}  $x=0$ if $w > g^{+}\cdot q$.}
\end{equation}
This recursive equation defines the sequence $X$ of outcomes for Lemma \ref{fYSGaBvd6u}.
Note that the present analysis rests
on the hypothesis that $m$ remains positive, which is a property that will be verified.

\subsection{Verification of fine granularity strategy: concluding the proof of Lemma \ref{fYSGaBvd6u}}\label{VEtaAeoOkI}
Given $M$ of \S\ref{AVVUpDcdaX} and a $g$-granular saving strategy $T$, we define 
$X$ by \eqref{1XGEpgJS} and it remains to show that $\lim_n M(X\restr_n)=\infty$ and
$\lim_n S_{T}(X\restr_n)<\infty$.
First we show that bankruptcy is avoided along $X$.
\begin{lem}[$M$ is never bankrupt along $X$]\label{lem:mwd}
At all stages  along $X$ we have $m \ge g^{\ast}$; in the case of a saving step we have
$m'=m \ge g^{\ast}$ and in the case of a betting step $m' \ge g^{+}$.
\end{lem}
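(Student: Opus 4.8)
The plan is to track the $M$-capital process $m$ along $X$ and argue that it can never drop below the current granule $g^\ast$, by an induction on the stages, splitting each stage into its saving and betting steps as set up in the notational conventions of \S\ref{AVVUpDcdaX}. Since $M$ does not save, in a saving step we trivially have $m'=m$, so the only content is in the betting steps, where $m' = m \pm g^+$ according to the outcome $x$ given by \eqref{1XGEpgJS}.

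\medskip

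First I would record the base case: at stage $0$ the capital of $M$ is $M(\lambda) = 2^{-g(0)} = g^\ast$, so $m\ge g^\ast$ holds initially. For the inductive step, suppose $m \ge g^\ast = 2^{-g(s)}$ at the beginning of stage $s+1$; since $g$ is non-decreasing, $g^+ = 2^{-g(s+1)} \le 2^{-g(s)} = g^\ast \le m$, so $M$ is never in the ``$m<g^+$, do not bet'' case, and $m' = m \pm g^+$. The only way $m$ can decrease is a $0$-outcome, which by \eqref{1XGEpgJS} happens exactly when $w > g^+\cdot q$, i.e.\ when the quotient $q = \floor{t/m}$ satisfies $q < w/g^+$, equivalently $q \le w/g^+ - 1$ since $w/g^+$ is an integer. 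The crucial point is that a $0$-outcome can only be triggered when $t/m$ is small: from $q = \floor{t/m} < w/g^+$ and $w \ge g^+$ we get... actually the cleaner route is to use the monotonicity \eqref{dWqsjR6bsC} of the ratio $t/m$ along $X$ together with the fact that a $0$-outcome strictly decreases it.

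\medskip

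The key structural observation I would exploit is this: whenever a $0$-outcome occurs at a betting step, we have $w > g^+ \cdot q \ge 0$, so $w \ge g^+$ (as $w$ is a positive integer multiple of $g^+$ and exceeds the non-negative integer multiple $g^+q$), and also $q = \floor{t/m}$, so $w > g^+ q$ forces $t < m\cdot(w/g^+)$; but we need a lower bound on $m$, not an upper bound on $t$. The argument that actually works is to show that $m$ decreasing from $m$ to $m - g^+$ keeps us at $m - g^+ \ge g^+ \ge$ ... here I would instead argue: since the ratio $t/m$ only ever decreases along $X$ (by \eqref{dWqsjR6bsC}) and a $0$-outcome requires $q = \floor{t/m} < w/g^+$, once $m$ has dropped to $g^+$ the remainder/quotient bookkeeping shows $t/m$ is already so small that $q=0$, hence $w > 0$ always gives a $0$-outcome and $m$ would threaten to hit $0$; the resolution is that $M$'s rule forbids betting when $m < g^+$, so the last possible bet leaves $m' \ge m - g^+ \ge g^+ - g^+$, which is not enough. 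The honest main obstacle, and the step I expect to require the real work, is precisely pinning down why $m$ cannot be driven down to $0$: one must use that a $0$-outcome is only selected when $t \le g^+ q \le t$, i.e.\ when $m$ divides into $t$ with quotient $q$ and $w/g^+ > q$, and combine this with the facts that $t$ is $g^\ast$-granular, $t\ge 0$, and $m$ is $g^\ast$-granular, to conclude $m - g^+ \ge g^\ast$ (not merely $\ge g^+$), i.e.\ that the granularity of $m$ prevents it from landing strictly between $0$ and $g^\ast$.

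\medskip

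So the skeleton of the proof is: (1) base case $m = g^\ast$ at stage $0$; (2) in a saving step $m' = m$, nothing to check; (3) in a betting step, use $g^+ \le g^\ast \le m$ (monotonicity of $g$ plus the inductive hypothesis) to conclude $M$ does bet, so $m' \in \{m+g^+, m-g^+\}$, and in the $1$-outcome case $m' = m + g^+ \ge g^+$ immediately; (4) in the $0$-outcome case, unwind \eqref{1XGEpgJS} to get $w > g^+ q$, use that $t, m, w$ are all integer multiples of the relevant granules together with $q = \floor{t/m}$ to get the needed lower bound $m' = m - g^+ \ge g^+$; and (5) note that $g^\ast$ at stage $s+1$ equals $g^+$ at stage $s$, so the bound $m' \ge g^+$ just established is exactly $m' \ge g^\ast$ for the next stage, closing the induction. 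I expect step (4) — showing the $0$-outcome rule cannot bankrupt $M$, i.e.\ that $m$ stays $\ge g^+$ after a losing bet — to be the crux, and it will hinge on the fact that $w \le t$ (the wager of the supermartingale $T$ never exceeds its capital) combined with $t/m$ being small enough at that point that $q=0$, forcing the relevant inequalities to be tight.
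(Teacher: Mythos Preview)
Your skeleton (base case, saving step trivial, betting step with outcome $1$ immediate, and step (5) closing the induction) matches the paper's proof. The gap is entirely in your step (4), and your expectation for how to fill it is off in a way that matters.

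You frame step (4) as: in the $0$-outcome case, show $m' = m - g^{+} \ge g^{+}$. But when $m = g^{\ast} = g^{+}$ this would require $0 \ge g^{+}$, which is false, so no amount of granularity bookkeeping will rescue that inequality. The correct move in the critical case $m = g^{\ast} = g^{+}$ is not to bound $m - g^{+}$ from below, but to show that the $0$-outcome is \emph{never selected} there. The reason is that $t$ is an integer multiple of $g^{\ast}$ (we assumed $\sigma\mapsto T(\sigma)$ is $g$-granular via Lemma~\ref{nXAZQtLDNS}), so $m = g^{\ast}$ divides $t$, giving $r = 0$ and $t = q\cdot m = q\cdot g^{+}$. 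Since $|w| \le t$ (a supermartingale cannot wager more than its capital), we get $w \le q\cdot g^{+}$, and by \eqref{1XGEpgJS} the outcome is $1$. You did identify $w \le t$ as the key fact, but the conclusion is $r = 0$, not $q = 0$; your guess that ``$t/m$ being small enough at that point that $q=0$'' is the wrong direction.

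The non-critical cases are easier than you make them sound: if $g^{+} < g^{\ast}$, or if $g^{+} = g^{\ast}$ but $m > g^{\ast}$, then $m - g^{+} > 0$; since $m'$ is an integer multiple of $g^{+}$, positivity alone gives $m' \ge g^{+}$. This is the granularity argument you gestured at, and it needs no analysis of $q$ or $w$ at all. So the case split you want is on whether $m > g^{+}$ (granularity suffices) or $m = g^{+}$ (force outcome $1$ via $r=0$ and $w\le t$), exactly as in the paper.
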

\begin{proof}

Recall that at each stage and step, $m$ denotes the value of $M$ along $X$ at the beginning of the given step, while
$m'$ denotes the  value of $M$ at the end of the given step. 
At each betting step $m'$ is also the value of $M$
along $X$ at the beginning of the next stage, so it suffices to prove 
by induction on the stages and steps that $m \ge g^{\ast}$ at each step.
At stage 0 we have $m=2^{-g(0)}=g^{\ast}$. Inductively suppose that
$m \ge g^{\ast}$ at the start of some stage, so $m'=m\geq g^{\ast}$ at the saving step. In order to 
complete the induction step, it suffices to show that at the next betting step we have
$m'\ge g^{+}$ (which is equivalent to $m \ge g^{\ast}$ referenced at the next stage).

If $g^{+}<g^{\ast}$, by the definition of $M$
we have $m'\geq m-g^{+}>m-g^{\ast}\geq 0$ so $ m'>0$ and since $m'$ is $g$-granular,
we have that $m'\geq g^{+}$ as required.
If $g^{+}=g^{\ast}$ and $m>g^{\ast}$, by the same argument we get
$m'\geq g^{+}$ as required. So it remains to examine the case where  $m=g^{\ast}$
at the betting step in question. In this case it suffices to show that the outcome chosen by $X$ will by 1,
so that $M$ increases its capital. Since  $m=g^{\ast}$ it follows that $m$ divides $t$, \ie $r=0$,
so $t=q\cdot m=q\cdot g^{\ast}=q\cdot g^{+}$. Moreover $|w|\leq t$, since $T$ cannot bet
more than its current capital, so $w\leq q\cdot g^{+}$. According to the definition of $x$ in  
\eqref{1XGEpgJS} it follows that the chosen outcome is 1, as required, which concludes
the induction step and the proof of the lemma.
 \end{proof}

\begin{lem}[Monotonicity of ratios]\label{lem:lwd}
At each stage and step along $X$ we have $q'\leq q$; if $q'=q$ then $r'\leq r$.
At a betting step where $q'=q$ and
the 0-outcome is chosen, $r'<r$.
\end{lem}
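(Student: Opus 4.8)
The statement to prove is Lemma~\ref{lem:lwd}: along the outcome sequence $X$ defined by \eqref{1XGEpgJS}, at each stage and step we have $q'\leq q$, with $r'\leq r$ whenever $q'=q$, and moreover $r'<r$ at a betting step where $q'=q$ and the $0$-outcome was chosen.

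\bigskip

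The plan is to prove this by a case analysis on the type of step (saving step versus betting step), using the already-established relation \eqref{dWqsjR6bsC}, namely that $t'/m'\leq t/m$ at every step (strictly at a $0$-outcome betting step), together with Lemma~\ref{lem:mwd} which guarantees $m,m'>0$ so that all the quotients and remainders are well defined. The key algebraic point is elementary: if $t'/m'\leq t/m$ and both sides are expressed as $q + r/m$ (respectively $q'+r'/m'$) with $q,q'$ nonnegative integers and $0\leq r<m$, $0\leq r'<m'$, then the integer parts satisfy $q'\leq q$; and if $q'=q$ then the fractional parts satisfy $r'/m'\leq r/m$. So the whole lemma reduces to translating the ratio inequality \eqref{dWqsjR6bsC} into a statement about integer parts and fractional parts — provided one is careful that $q$ is genuinely the floor of $t/m$, i.e.\ that $r<m$, which is exactly the content of Definition~\ref{zY3U5A7rVM} and is preserved by the construction (this is recorded in the type column of Table~\ref{ABuEidRZu}).

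\bigskip

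First I would handle the saving step. Here $m'=m$ and $t'\leq t$ (the capital can only drop at a savings step), so $t'/m\leq t/m$, hence $q'+r'/m\leq q+r/m$; since $r,r'<m$ this forces $q'\leq q$, and if $q'=q$ then $r'\leq r$ (in fact $r'/m\leq r/m$ gives $r'\leq r$ directly since the denominators agree). Next I would handle the betting step. The outcome $x$ was chosen precisely so that $t'/m'\leq t/m$, with strict inequality in the $0$-outcome case, by \eqref{dWqsjR6bsC}; now the same floor/fractional-part argument gives $q'\leq q$ and, when $q'=q$, the inequality $r'/m'\leq r/m$. To extract $r'\leq r$ from $r'/m'\leq r/m$ when the denominators may differ, I would note that along $X$ we have $m'\leq m$ at every step (at a saving step $m'=m$; at a betting step the $1$-outcome is chosen only when it does not increase the ratio, and in the $0$-outcome case $m$ decreases, while in the $1$-outcome case one checks $t/m$ would rise unless\ldots) — actually the cleanest route is: when $q'=q$, from $q+r'/m'\leq q+r/m$ we get $r'/m'\leq r/m$; since $r'<m'\leq m$ (using $m'\leq m$, which holds because $M$'s wager at stage $n+1$ is exactly $g^{+}$ and an outcome that would make $m$ jump above the ratio is never forced — more precisely $m'\le m$ fails in the $1$-outcome case, so instead I argue directly) we combine with the granularity: both $r$ and $r'$ are integer multiples of $g^{\ast}$ (Table~\ref{ABuEidRZu}), so if $r'/m'\leq r/m$ and $r'>r$ we would need $m'<m\cdot r'/r$, and a short computation using $r'\geq r+g^{\ast}$ and $m'\geq g^{+}$ yields a contradiction with the ratio bound. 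Finally, for the strict claim, at a $0$-outcome betting step the ratio strictly decreases, so $q'+r'/m'< q+r/m$; if $q'=q$ this gives $r'/m'<r/m$, and the same granularity/denominator argument upgrades this to $r'<r$.

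\bigskip

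The main obstacle I anticipate is the bookkeeping in the $1$-outcome betting case, where the denominator $m$ strictly increases (to $m+g^{+}$), so that $r'/m'\leq r/m$ does not immediately give $r'\leq r$ — one must use that $r,r'$ live on the grid $g^{\ast}\cdot\mathbb{Z}^{+}$ and that $m,m'$ are themselves controlled multiples of $g^{\ast}$, to rule out the scenario where a larger remainder over a larger denominator still yields a smaller or equal ratio. I expect this to be a short but slightly fiddly estimate; everything else is a direct consequence of \eqref{dWqsjR6bsC} and Lemma~\ref{lem:mwd}. In writing it up I would state the floor/fractional-part principle once as a small observation and then invoke it in each case, keeping the case analysis (saving step; betting step with $1$-outcome; betting step with $0$-outcome) explicit and parallel.
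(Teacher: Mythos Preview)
Your argument for $q'\leq q$ is exactly the paper's: from $t'/m'\leq t/m$ and $0\leq r<m$, $0\leq r'<m'$ one gets $q'\leq q$ immediately. Your treatment of the saving step is also fine.

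Where you diverge from the paper is in the betting step, and here you make life harder than necessary. You try to pass through the fractional inequality $r'/m'\leq r/m$ and then wrestle with the fact that $m'>m$ under the $1$-outcome, invoking a granularity argument that is only sketched (and contains slips: you write ``we would need $m'<m\cdot r'/r$'' when the inequality goes the other way, and ``$r'\geq r+g^{\ast}$'' should be $r'\geq r+g^{+}$). This route can be repaired---from $r'/m'\leq r/m$ and $m'=m+g^{+}$ one gets $r'\leq r+rg^{+}/m<r+g^{+}$, and then granularity forces $r'\leq r$---but it is needlessly indirect.

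The paper instead computes $r'$ \emph{directly} from $r'=t'-q\cdot m'$ once $q'=q$ is known, using the very inequality on $w$ that defined the outcome. Under outcome $1$ we have $t'=t+w$, $m'=m+g^{+}$, $w\leq q\cdot g^{+}$, so
\[
r'=t+w-q(m+g^{+})=(t-qm)+(w-qg^{+})=r+(w-qg^{+})\leq r,
\]
and under outcome $0$ we have $t'=t-w$, $m'=m-g^{+}$, $w>q\cdot g^{+}$, so
\[
r'=t-w-q(m-g^{+})=(t-qm)-(w-qg^{+})=r-(w-qg^{+})<r.
\]
No ratio of remainders, no granularity argument, no worry about which way $m$ moved: the outcome rule \eqref{1XGEpgJS} was designed exactly so that $w-qg^{+}$ has the right sign. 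I would rewrite your betting-step analysis along these lines.
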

\begin{proof} 
By \eqref{1XGEpgJS} and \eqref{dWqsjR6bsC}
we have $t'/m'\leq t/m$ so
$q'+r'/m'\leq q+r/m$, and since 
$q,q'$ are integers, $r'<m'$ and $r<m$, it follows that
$q'\leq q$.
For the second part, assume that $q'=q$.
Under outcome 1
we have $m'=m+g^{+}$, $t'=t+w$, $w \leq g^{+}\cdot q$;
under outcome 0 we have 
$m'=m-g^{+}$, $t'=t-w$, $w > g^{+}\cdot q$. Hence:
\[
\begin{tabular}{cc}
\textrm{\em Outcome 1:}&
$r'=t'-q\cdot (m+g^{+})=t+w-qm-qg^{+}\leq t-qm=r$\\[0.1cm]
\textrm{\em Outcome 0:}&
$r'=t'-q\cdot (m-g^{+})=t-w-qm+qg^{+}< t-qm=r$
\end{tabular}
\]
which concludes the proof of the lemma.
\end{proof}

\begin{lem}[Marginal savings and 0-outcomes]\label{16SbRAgwbi}
Along $X$, after some stage $q,q'$ remain constant and $r'\leq r$.
The marginal savings of $T$ at such a sufficiently large saving step along $X$
is $r-r'$ and in the case of a sufficiently large betting step where the 0-outcome is chosen we have $r-r'\geq g^{+}$.
\end{lem}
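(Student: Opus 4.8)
The plan is to combine the two monotonicity facts already established---Lemma~\ref{lem:mwd} (so $m\ge g^\ast>0$ always, hence the division of $t$ by $m$ in Definition~\ref{zY3U5A7rVM} is always legitimate along $X$) and Lemma~\ref{lem:lwd} (so $q$ is non-increasing and, whenever $q$ stays fixed, $r$ is non-increasing, with a strict drop on a $0$-outcome betting step)---with the crucial observation that $q$ takes values in the non-negative integers $\mathbb{Z}^+$. Since $q'\le q$ at every step and $q\ge 0$, the sequence of values of $q$ along $X$ is eventually constant: there is a stage $s_0$ after which $q'=q$ at every subsequent step. This gives the first sentence of the lemma. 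From then on Lemma~\ref{lem:lwd} applies in its ``$q'=q$'' regime, yielding $r'\le r$ at every step past $s_0$.

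Next I would compute the marginal saving of $T$ at such a sufficiently large saving step. By the definition of marginal savings in \S\ref{Bmb4ouelqF}, at a saving step the capital drops from $t$ to $t'$ with $m'=m$ (the casino outcome is not yet revealed), so the marginal saving is exactly $t-t'$. Writing $t=q\cdot m+r$ and $t'=q'\cdot m'+r'=q\cdot m+r'$ (using $q'=q$ and $m'=m$), the quotient contributions cancel and we get $t-t'=r-r'$, which is the claimed value. This step is just bookkeeping once the ``$q$ eventually constant'' fact is in hand.

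Finally, for a sufficiently large betting step where the $0$-outcome is chosen, I would show $r-r'\ge g^+$. On a $0$-outcome we have, by \eqref{1XGEpgJS}, $w>g^+\cdot q$, and since $w$ is an integer multiple of $g^+$ and $q$ is an integer, in fact $w\ge g^+\cdot q+g^+=g^+(q+1)$. Also $m'=m-g^+$ and $t'=t-w$. Using $q'=q$ (we are past $s_0$) and the identity for $r'$ from the proof of Lemma~\ref{lem:lwd},
\[
r-r'=\bigl(t-qm\bigr)-\bigl(t-w-q(m-g^+)\bigr)=w-q\cdot g^+\ge g^+,
\]
as desired. The only genuine subtlety---and the point I would be most careful about---is making sure the ``eventually constant $q$'' argument is watertight: $q$ is non-increasing over the interleaved sequence of saving and betting steps (not just over stages), it is integer-valued, and it is bounded below by $0$, so it stabilizes; after that point every step falls into the $q'=q$ case of Lemma~\ref{lem:lwd}, and there is no further case analysis to do. Everything else reduces to substituting $t=qm+r$ and the transition formulas for $m',t'$ and simplifying.
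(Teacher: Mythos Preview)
Your proposal is correct and follows essentially the same approach as the paper: stabilize $q$ using Lemma~\ref{lem:lwd} and non-negativity, then read off $t-t'=r-r'$ at saving steps from $t=qm+r$, $t'=qm+r'$. The only cosmetic difference is in the final clause: the paper invokes the strict inequality $r'<r$ from Lemma~\ref{lem:lwd} together with the fact that $r,r'$ are integer multiples of $g^{+}$, whereas you redo the underlying computation $r-r'=w-qg^{+}\ge g^{+}$ directly---both routes amount to the same thing.
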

\begin{proof}
The first statement of the lemma follows from Lemma \ref{lem:lwd} since $q,q'$ are
non-negative integers. By the divisions $t=q\cdot m+r$, $t'=q'\cdot m'+r'$ and the
fact that $q'=q, m'=m$ at saving steps of any sufficiently large stage, 
it follows that $t-t'$, \ie the marginal savings of $t$ at this stage,
equals $r-r'$. Finally  $r-r'\geq g^{+}$ for the case of a 0-outcome in a 
sufficiently large betting step follows from 
Lemma \ref{lem:lwd} and the fact that $r,r'$ are integer multiples of $g^{+}$.
\end{proof}
\begin{lem}[Total savings of $T$ and growth of $M$]\label{lem:cbd}
Consider a stage after which  $q,q'$ remain constant along $X$, and let $r_0$ be
the value of $r$ at the start of that stage along $X$.
Then the remaining savings of $T$ along $X$ are at most $r_0$.
Moreover $m\to\infty$ along $X$.
\end{lem}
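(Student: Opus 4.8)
The plan is to finish the proof of Lemma \ref{lem:cbd} and hence of Lemma \ref{fYSGaBvd6u} by exploiting the monotonicity established in Lemmas \ref{lem:lwd} and \ref{16SbRAgwbi}. For the savings bound, fix a stage after which the quotient $q=q'$ is constant along $X$, and let $r_0$ be the value of $r$ at the start of that stage. By Lemma \ref{16SbRAgwbi}, at every subsequent saving step the marginal saving of $T$ equals $r-r'$, and at every subsequent betting step $r$ can only decrease (strictly, by at least $g^{+}$, when the $0$-outcome is chosen) while it is unchanged under the $1$-outcome; in particular a betting step never contributes any marginal saving beyond what is already accounted for by the change in $r$, since once $q$ is constant the marginal saving over a whole stage is $r-r'$ with $r'\le r$. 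Telescoping the quantities $r-r'$ over all steps from the fixed stage onwards, and using that $r$ stays non-negative, the total remaining savings of $T$ along $X$ is bounded by $r_0 - \lim_n r_n \le r_0$. Since the savings accumulated before the fixed stage are finite, $\lim_n S_T(X\restr_n)<\infty$.

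For the second assertion, that $m\to\infty$ along $X$, I would argue that the $1$-outcome is chosen infinitely often, and in fact cofinitely with only finitely many interruptions by $0$-outcomes; more precisely I would show $\sum$ of the $0$-outcome contributions is finite while $M$ grows by $g^{+}$ on each $1$-outcome and the series $\sum_n 2^{-g(n)}$ diverges. The key point: each $0$-outcome at a sufficiently large betting step forces $r$ to drop by at least $g^{+}\ge$ (the relevant granule), and since $r\ge 0$ always and $r$ is reset only upward by bounded amounts — here one must be careful, as under a $1$-outcome $r$ does not increase by Lemma \ref{lem:lwd}, so in fact after the fixed stage $r$ is non-increasing across \emph{all} steps — the total decrease of $r$ is at most $r_0$, hence only finitely many $0$-outcomes occur at betting steps after the fixed stage. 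Therefore cofinitely many betting steps along $X$ produce the $1$-outcome, so $M$ increases by $2^{-g(n+1)}$ at cofinitely many stages $n$; since $\sum_n 2^{-g(n)}=\infty$ and $M$ is never bankrupt (Lemma \ref{lem:mwd}), we get $M(X\restr_n)\to\infty$.

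I expect the main obstacle to be bookkeeping around the hypothesis ``$q,q'$ remain constant'': one must verify that such a stage exists (this is the first sentence of Lemma \ref{16SbRAgwbi}, already proved, using that $q$ is a non-negative integer and non-increasing, hence eventually constant), and then handle the boundary case where $m$ is small and the definition of $M$ suppresses a bet — but Lemma \ref{lem:mwd} already guarantees $m\ge g^{\ast}$ at every stage, so $M$ actually always bets. A secondary subtlety is making the telescoping rigorous when saving steps and betting steps alternate: I would combine each stage's saving step and betting step and note the net change in $r$ over a stage is $\le 0$ with the saving-step part exactly equal to the marginal saving, so summing marginal savings over stages is a sub-sum of the telescoped decreases of $r$, which converges. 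Once both halves are in place, the lemma — and with it Lemma \ref{fYSGaBvd6u}, since $X$ was constructed in \S\ref{AVVUpDcdaX} and bankruptcy was ruled out in Lemma \ref{lem:mwd} — follows immediately.
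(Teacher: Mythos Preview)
Your argument for the savings bound is correct and matches the paper's: once $q$ has stabilized, the marginal savings at each saving step are exactly $r-r'$, $r$ is non-increasing across all steps, and $r\ge 0$, so the telescoped sum of savings from that stage on is at most $r_0$.

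The gap is in your proof that $m\to\infty$. From ``the total decrease of $r$ is at most $r_0$'' and ``each $0$-outcome forces $r$ to drop by at least $g^{+}$'' you conclude ``only finitely many $0$-outcomes occur''. That inference fails whenever $g$ is unbounded: the granule $g^{+}=2^{-g(s+1)}$ may tend to $0$, and then infinitely many drops of size $\ge g^{+}$ are perfectly compatible with a finite total drop $\le r_0$ (take $g(n)=\lfloor\log_2 n\rfloor$, for instance, which still satisfies $\sum_n 2^{-g(n)}=\infty$). So ``cofinitely many betting steps produce outcome $1$'' is not established.

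The paper avoids this by never counting $0$-outcomes. It observes that at a $0$-outcome betting step the loss in $m$ is exactly $g^{+}$ while the loss in $r$ is at least $g^{+}$; hence the total downward variation of $m$ after stage $s_0$ is bounded by the total drop in $r$, which is at most $r_0$. Since the sum of \emph{all} increments $g^{+}$ diverges by hypothesis, one gets $M(X\restr_s)\ge M(X\restr_{s_0})+\sum_{n\ge s_0}2^{-g(n+1)}-r_0\to\infty$. Your argument is easily repaired along the same lines: replace the ``finitely many $0$-outcomes'' step by the bound $\sum_{\text{$0$-outcome steps}}g^{+}\le r_0$, and then split the divergent series $\sum g^{+}$ into its $1$-outcome and $0$-outcome parts.
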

\begin{proof}
The total savings of $T$ along $X$ equals the sum of its marginal savings along $X$. Hence
the first part of the lemma follows from Lemma \ref{16SbRAgwbi} and the fact that $r$ remains
non-negative throughout the process, as well as non-decreasing after a stage where
 $q,q'$  have reached a limit.
For the second part, recall that $m'-m=g^{+}$ 
at betting steps where outcome 1 is chosen. 
At betting steps where outcome 0 is chosen we have 
$m'<m$ 
and by Lemma \ref{16SbRAgwbi},
$-(m'-m)=g^{+}\leq r-r'$. 
At saving steps, $m'=m$.
Hence 
if $s_0$ is the stage mentioned in the statement of the present lemma, and $r_0$ is the value of $r$ at
that stage, we have
$M(X\restr_s)\geq \sum_{s\geq s_0} 2^{-g(s+1)} -r_0$, which shows that
$m\to\infty$ along $X$.
\end{proof}

\subsection{Bounded or timid saving adversary: proof of Theorem \ref{esRYf6OWn5}}\label{kqkjmaDuX}
Given $g$ with $\sum_i 2^{-g(i)}=\infty$, Theorem \ref{esRYf6OWn5} asserts the existence of a 
$g$-timid and strategy $M\leq_T g$ such that for any countable
family $(T_i)$ of $g$-granular saving strategies:
\begin{equation}\label{xYVnzguxPT}
\parbox{13.5cm}{there exists $X$ such that
 $\limsup_n M(X\restr_n)=\infty$ and $\limsup_n S_{T_i} (X\restr_n)<\infty$ for each $i$, provided that
 one of the following holds: 
(a) $(T_i)$ is bounded; or (b) each $T_i$ is $g$-timid.}
\end{equation}
As noted in the first part of \S\ref{MEchtencE}, part (a) of Theorem \ref{esRYf6OWn5} follows from 
Lemma \ref{fYSGaBvd6u}, so it remains to prove the case where each $T_i$
is $g$-timid. 
We adopt the notation of
\S\ref{AVVUpDcdaX}, also summarized in Table \ref{ABuEidRZu}, 
and let $t_i$ denote the process associated with $T_i$.
Also let $(c_i)$ be a sequence of positive integers such that for each $i$,
the wager of $T_i$ on strings of length $k$ is $<(c_i-1)\cdot 2^{-g(k)}$.
We let $M$ be the the martingale of Lemma \ref{fYSGaBvd6u}, which was formally defined in \S\ref{AVVUpDcdaX}.

{\em Intuitive outline.}
In the special case where the $T_i$ are timid, note that \eqref{xYVnzguxPT} is a
`universal' version of Lemma \ref{fYSGaBvd6u}, in the sense that it requires all pairs $(M,T_i)$ 
to meet the requirement `$M$ succeeds betting and $T_i$ fails saving', as opposed to a single $(M,T)$
in Lemma \ref{fYSGaBvd6u}. In this light, our
plan for the proof of case (b) of \eqref{xYVnzguxPT} 
is to implement the argument of \S\ref{AVVUpDcdaX} concurrently for all $T_i, i\in\Nat$.
In order to avoid interference of the actions of $M$ with respect to different members of the list
$T_i, i\in\Nat$, we need to {\em nest} the requirements through the use of the
following hierarchy of  divisions based on Definition \ref{zY3U5A7rVM}.\footnote{This nesting argument
is based on the proof of the saving paradox of 
\citet{Teu14agCWGnp} in \cite[Theorem 14]{Peretzwager}.}

{\em Hierarchy of nested divisions.}
Let $m_0:=m$, consider the integer quotient of $q_0:=\floor{t_0/m_0}$ and 
consider the remainder $r_0=t_0-q_0\cdot m_0$. 
For each $i>0$, assuming that $m_{i-1},r_{i-1}$ are defined and 
$c_i\cdot r_{i-1}+i<m_{i-1}$, define inductively
\begin{equation}\label{AEXRV6b1Ro}
m_i=m_{i-1}-c_i\cdot r_{i-1}-i
\hspace{0.5cm}\textrm{and}\hspace{0.5cm}
q_i:=\left\lfloor t_i/m_i\right\rfloor
\hspace{0.5cm}\textrm{and}\hspace{0.5cm}
r_i=t_i-q_i\cdot m_i.
\end{equation}
Note that $m$ is a random variable indicating the capital of the strategy $M$ along the tree of binary outcomes,
while  $m_i$ is simply a parameter (also random variable) obtained recursively from $m$ and $(t_j)$ 
after $i$ many iterations of the recursive definition \eqref{AEXRV6b1Ro}. 
In particular, $m_i, i>0$ is not  a martingale or wager related to $M$,
but simply a function of  $m, t_j, j<i$.
On the other hand, for each $i$ we let  $w_i$ denote the wager that $T_i$ bets on outcome 1
at the present step.

{\em Technical outline.}
For (b) of \eqref{xYVnzguxPT} we need define $X\in\twome$ which satisfies two conditions:
$M$ succeeds ($m\to\infty$) along $X$  and each $T_i, i\in\Nat$ fail to save along $X$. The parameters $m_i, t_i,r_i$
of the nested divisions will be used in order to determine the bits of $X$.
The nested divisions will ensure
that $m>0$ at all states of the process where these nested divisions are considered; this crucial property 
would be threatened if we directly divided $t_i$ by $m$ and 
implemented the
argument of \S\ref{AVVUpDcdaX} concurrently for all $T_i, i\in\Nat$ without nesting.
The combination of:
\begin{itemize}
\item  the hierarchical relationship of parameters $m_i, t_i,r_i$
\item the assumption that   $T_i, i\in\Nat$ are timid
and the property that $m>0$ along $X$
\end{itemize}
 allow the application of
the argument of \S\ref{AVVUpDcdaX} for the 
 establishment that
$m\to\infty$ along $X$. In order to establish that $T_i$ saves at most a finite amount along $X$
we ensure that the nested devisions eventually stabilize and provide a bound on the future savings of $T_i$
in the sense that for each $i$: 
\begin{itemize}
\item after a sufficiently large bit along $X$ quotient $q_i$ becomes constant and $r_i$ becomes nonincreasing;
\item at any larger position along $X$, the remaining savings of $t_i$ along $X$ are bounded by $r_i$.
\end{itemize}
We use $t_i, t_i'$ to denote the values of $T_i$ at the beginning and the end
of a transition or stage of the process, and similar notation applies to
$q_i,r_i,m_i,m$.  For each $i>0$ we say that $m_i$ {\em is defined} at a certain state,
denoted by $m_i\de$, if
$m_{i-1}$ is defined and  $i+c_i\cdot r_{i-1}<m_{i-1}$.
It follows that if $m_n\de$ then $m_j< m_i-r_i$ for all $j\leq i\leq n$.
We say that {\em $t_i$ requires attention} at some state if $m_i$ is defined and  
$w_i\neq q_i\cdot g^+$.

{\bf Construction for (b) of Theorem \ref{esRYf6OWn5}.}
At stage $s+1$ let $i\leq s$ be the least such that $t_i$  requires attention.
If no such $i$ exists, choose outcome 1. Otherwise, if 
$w_i\leq q_i\cdot g^+$ choose outcome 1 and 
if $w_i > q_i\cdot g^+$
choose outcome 0.
Let $X$ be the binary sequence determined by this choice of outcomes.

{\bf Verification  for (b) of Theorem \ref{esRYf6OWn5}.}
We use the argument of \S \ref{VEtaAeoOkI}  to show that,
along $X$, for all $i$,
\begin{equation}\label{OgJAr3Mv7w}
\textrm{$\exists $ stage after which:\ \ }
\left\{\ \ \parbox{7.7cm}{%
\begin{enumerate}[(A)]
\item $m_i\de\ \geq i$, $q_i$ is constant and $r_i$ is nonincreasing;
\item if some $j\leq i$ requires attention then $r_i'<r_i$.
\end{enumerate}
%
}\ \ \right\}
\end{equation}
For $i=0$, every time $t_i$ requires attention the outcome will be chosen
by the same rule as in the proof of Lemma \ref{fYSGaBvd6u}. At stages
where $t_0$ does not require attention, 
if $m_0=r_0+g^{+}$ then $m_j=g^{+}$ and hence 
$w_j\leq t_j=q_j g^{+}$ for all $j$ such that $m_j\de$.
This means that at stages where
$t_0$ does not require attention, 
and $m_0=r_0+g^{+}$, outcome 1 will be chosen, ensuring that $m_0'>m_0$.
Hence at  stages where
$t_0$ does not require attention, 
$r_0'=r_0$ and $q_0'=q_0$. 
With these observations,
the argument of \S\ref{VEtaAeoOkI}
proves \eqref{OgJAr3Mv7w} for $i=0$.

Now inductively let $k>0$ and assume that \eqref{OgJAr3Mv7w} holds for all $i<k$.
In order to prove  \eqref{OgJAr3Mv7w} for $i=k$,
let $s_0$ a stage along $X$ which is larger than the stage mentioned in  \eqref{OgJAr3Mv7w}
for each $i<k$. 
After $s_0$ along $X$, for each $i<k$ we have $m_i\de$ and $r_i$ is non-increasing.
First we show that at some $s_1\geq s_0$ we have $m_k\de$, \ie 
$m_{k-1}>c_k\cdot r_{k-1}+k$. For a contradiction, suppose that this is not the case, so
for all $s\geq s_0$ and all $j\geq k$ we have $m_j\un$ and $t_j$ does not require attention.
By the monotonicity of $r_{k-1}$, this means that $m_{k-1}$ is bounded above.
Moreover, after $s_0$ along $X$, every time outcome 0 is chosen, some $t_j$ with 
$j<k$ requires attention, so by the induction hypothesis $r_{k-1}'<r_{k-1}$.
By the monotonicity of $r_{k-1}$, the fact that it is bounded below by 0 and the granularity of 
$r_{k-1}$, it follows that the sum of all $g^+$ along $X$ at stages after $s_0$ where outcome
0 is chosen is finite. Hence the downward variation of $m$ is finite, 
and since the sum of all $g^+$ along $X$ after $s_0$ is infinite (on the assumption of
coarse granularity) it follows that the upward variation of $m$ along $X$ is infinite.
Hence $m\to\infty$ along $X$, and since $r_{i}, i<k$ are non-increasing along $X$, it follows that
$m_{k-1}\to\infty$ along $X$, which contradicts the assumption that $m_k$ is not defined after
$s_0$ along $X$. This contradiction shows that at some $s\geq s_0$ along $X$,
we have $m_{k-1}>c_k\cdot r_{k-1}+k$ so $m_k\de$.
Let $s_1$ be the least such stage.

Next, we show that $m_k\de$ at all $s\geq s_1$ along $X$.
The only reason why this may fail is that 
$m_{k-1}'\leq c_k\cdot r_{k-1}'+k$ at some 
$s\geq s_1$ along $X$ which, by the induction hypothesis, implies that 
$m_{k-1}= c_k\cdot r_{k-1}+k+g^+$, outcome 0 is chosen by the construction along $X$,
and $r_{k-1}'\geq r_{k-1}$. At such a stage $r_k=0$ and $m_k=g^+$ so $w_k\leq t_k= q_k g^+$ 
and  $ q_k g^-w_k\geq 0$, so it is not possible that outcome 0 is chosen by $t_k$ requiring attention.
On the other hand, if some $j<k$ requires attention, by the induction hypothesis we have
$r_{k-1}'< r_{k-1}$ which also disagrees with the above conditions. Finally if no $t_j, j\leq k$ requires
attention, outcome 1 is chosen at the given stage along $X$.
This concludes the proof that for each  $s\geq s_1$ along $X$ we have  $m_k\de$
and, by definition, $m_k\geq k$.

In order to show (A) of \eqref{OgJAr3Mv7w}, 
we show that after stage $s_1$ along $X$ we have
$q_k'< q_k\ \vee\ (q_k'= q_k \wedge\ r_k'\leq r_k)$.
At saving steps we have $t_k'=t_k$ and $m_k'\geq m_k$ so 
$q_k'< q_k\ \vee\ (q_k'= q_k \wedge\ r_k'\leq r_k)$. If at some betting step after $s_1$
some $j<k$ requires attention, then by the inductive hypothesis $r_{k-1}'<r_{k-1}$.
In this case, $m_k'-m_k\geq c_k\cdot g^+ -g^+> w_k$. 
Hence if  $q_k=0$ we have $q_k'=0$ and  $r_k'< r_k$; 
if $q_k>0$ then $w_k< q_k (m_k'-m_k)$ so either $q_k'<q_k$ or 
$q_k'=q_k\ \wedge\ r_k'<r_k$.
If no $j<k$ requires attention at the betting step then
$r_{k-1}'=r_{k-1}$. In this case, if $w_k\neq q_k\cdot g^+$ the construction 
will choose the outcome so that $q_k'< q_k\ \vee\ (q_k'= q_k \wedge\ r_k'\leq r_k)$;
otherwise $q_k'=q_k$ and $r_k'=r_k$.
This completes the proof that 
at each betting or saving step after $s_1$ along $X$, we have 
$q_k'< q_k\ \vee\ (q_k'= q_k \wedge\ r_k'\leq r_k)$. Hence $q_k$ reaches a limit
at some stage $s_2\geq s_1$ and this shows property (A) of \eqref{OgJAr3Mv7w}
for $i=k$.

Finally, to conclude the inductive step for the proof of \eqref{OgJAr3Mv7w}, we show that
after stage $s_2$ along $X$, if some $j\leq k$ requires attention then $r_k'<r_k$.
In the case where $j<k$, given that $q_k'=q_k$, 
 this follows by the expression $m_k=m_{k-1}-c_k r_{k-1}-k$,
the induction hypothesis which gives $r_{k-1}'<r_{k-1}$, 
and the fact that $|w_k|< (c_k-1)\cdot g^+$. In the case that $j=k$ the construction
choses the outcome so that $r_{k-1}'<r_{k-1}$, provided that $q_k'=q_k$ (which we also have
by the choice of $s_2$). This completes the inductive proof of \eqref{OgJAr3Mv7w}.

Clause of (A) of \eqref{OgJAr3Mv7w}  implies that $m\to\infty$ along $X$, so 
 $M$ is successful along $X$.
For the proof of clause (b) of Theorem \ref{esRYf6OWn5},
it remains to show that each $T_i$ saves at most a finite capital along $X$. Given $i\in\Nat$ let
$t_0$ be a stage with the properties of  \eqref{OgJAr3Mv7w}. At each saving step after stage $t_0$,
strategy $T_i$ saves exactly $r_i-r_i'$. Since $r_i$ is nonincreasing after $t_0$ and is lower bounded by 0,
it follows that after $t_0$ strategy $T_i$ can save at most $r_i[t_0]$. This completes the proof that $X$ meets
the properties of clause (b) of Theorem \ref{esRYf6OWn5}.

\section{Countable savings cover via hedging: proof of Theorem \ref{lL3i4uiQDZ}}\label{iPgHQCWM76}
Assuming that $g$ is nondecreasing, unbounded and $M$ is $g$-timid betting strategy,
Theorem \ref{lL3i4uiQDZ} asserts the existence of a
countable family $T_i, i\in\Nat$ of saving strategies such that:
\begin{enumerate}[\hspace{0.5cm}(i)]
\item $(T_i)$ is computable from $M$ with wagers  integer multiples of the corresponding wagers of $M$;
\item for each $X$ where $\limsup_n M(X\restr_n)=\infty$ there exists $i\in\Nat$ such that 
$\lim_n S_{T_i} (X\restr_n)=\infty$.
\end{enumerate}
%
Given $M,g$ as above, we construct saving strategies $T$ and $N_{\sigma}, \sigma\in I$,
where $I$ is a  certain set of binary 
strings that will be defined  below. The constructed $T, I, N_{\sigma}, \sigma\in I$
will be computable in $M,g$.
Here $T$ will be the main strategy, which follows $M$ closely, but not exactly, and attempts to
accumulate savings along each path where $M$ appears to succeed.

{\em Cycles and sub-cycles.}
Strategy $T$ works in cycles which repeatedly {\em start}, {\em pause},  {\em resume} or 
{\em end} along any possible path of binary outcomes. The parts of a cycle along a path, between a
resumption (or start) and a pause (or end) are called {\em sub-cycles}, and can be thought off as
the `rounds' of the cycle. 
A new (sub)cycle along a path can only start if the previous one has ended. 
Hence any path of outcomes can be partitioned into intervals $[\sigma,\tau]$ which are one of the following:
\begin{itemize}
\item {\em Sub-cycle interval:} 
a sub-cycle that started (or resumed) on $\sigma$ and ended (or paused) on $\tau$;
\item {\em Neutral interval:} a neutral interval which began straight 
after the end (or pause) of a sub-cycle, and ended
just before the start (or resumption) of a sub-cycle;
\end{itemize}
Intuitively, as we elaborate below, each `round' or sub-cycle starts 
with the goal of making either $T$ or an associated
strategy $N_{\sigma}$ save; which of these 
is achieved depends on the bets of $M$ along the given path.

{\bf Hedging against risky bets.}
A cycle of $T$ along a path of outcomes is when $T$ attempts to generate savings through its bets, while
during neutral intervals it does not save. In order to achieve savings, during cycles $T$ betting deviates 
slightly from the betting of $M$, with respect to the amount of the wagers, but following the outcome preference
of $M$.
The set $I$ which indexes the family $N_{\rho}$ consists of the strings 
where a new cycle (and its sub-cycles) of $T$ starts. 
A cycle starting at $\rho$ and running along a path extending $\rho$,
possibly with pauses, is associated with strategy $N_{\rho}$, 
which may be thought off as a backup or a {\em hedge against} $T$ during an interval
where its bets are more risky (deviating from the safe strategy $M$).
A cycle of $T$ runs on the same stages as the construction, measured in terms of the number of outcomes
revealed, or the number of bets placed. Along any path, a sub-cycle starting at $\sigma$  
may end in one of two possible ways: 
\begin{itemize}
\item {\em $T$-success of a sub-cycle:} 
strategy $T$ is in a position to save a certain amount, in which case the associated 
cycle is canceled and $N_{\rho}$ will not initiate further sub-cycles along any path extending $\sigma$.
\item {\em $T$-failure of a sub-cycle:} strategy $T$ is not in a position to save, but the associated backup 
strategy $N_{\rho}$ saves a certain amount. In this case $N_{\rho}$ may run further
sub-cycles along extensions of $\sigma$.
\end{itemize}
The {\em life interval} $[\rho,\tau]$  of $N_{\rho}$ along a path $Z$ (where $\tau$ may be the entire path $Z$) 
is determined by the starting stage $\rho$ and the ending stage $\tau$, leaving the interval open-ended in case
$N_{\rho}$ is never canceled along the path. During its life interval,  $N_{\rho}$ will
always bet, even during neutral intervals, but will only save upon each $T$-failed completion of its sub-cycles.
We will ensure one of the following outcomes along any path $X$:
%
\begin{enumerate}[\hspace{0.5cm}(a)]
\item $\limsup_n M(X\restr_n)<\infty$: in this case only finitely many sub-cycles are initiated along $X$;
\item infinitely many cycles $N_{\rho}$ initiated and ended along $X$: in this case $T$ gradually saves successfully;
\item there is a last cycle $N_{\rho}$ initiated along $X$, which never ends and generates
infinitely many sub-cycles along $X$, each of them ending in failure for $T$: in this case $N_{\rho}$
successfully saves along $X$. 
\end{enumerate}
{\bf Backup strategies and parameters.}
The backup strategies $N_{\rho}$ will be defined depending on the type of the stages.
They bet identically to $M$ in neutral intervals, while in sub-cycles intervals they bet
anti-symmetrically (\ie bet on different outcome) to $M$ with amplified wagers.
One backup strategy $N_{\rho}$ might end 
along a path of outcomes at some stage $\eta$, in which case we will drop it 
by setting $N_{\rho}(\sigma) = N_{\rho}(\eta)$ for all $\sigma \succ \eta$.
Then we will initiate a new backup strategy $N_{\eta}$ which will receive a initial capital of $M(\eta)$
and only start to bet at $\eta$.
In this way at any stage $\sigma$ there is exactly one active backup strategy.
Thus, for the construction at every stage we only need to specify the wager of the current active backup strategy, 
while for all other backup strategies their wagers are 0.
If $N_{\rho}$ is the active backup strategy at stage $\sigma$,
we define the index of the stage $\sigma$ as $i_{\sigma} = \rho$.
Our construction will ensure that $i_{\sigma}\preceq i_{\tau}$ when $\sigma\preceq\tau$.
For simplicity we: 
\begin{equation}\label{LVSHDGcf4N}
\parbox{13cm}{often omit the subscript $\rho$, 
when it is clear which backup strategy is the active one}
\end{equation}
for the current stage.
The difference $r_{\sigma}=r(\sigma)=T(\sigma)-M(\sigma)$
plays a special role in the argument, and can be thought off in the context of the argument in 
\S\ref{AVVUpDcdaX} as the remainder of the division of $T$ by $M$.
The bets of $N$ during sub-cycles 
depend on parameter $c_{\sigma}$ which is updated after each sub-cycle and serves as a marker of the value of the $r$ 
at the starting stage of a sub-cycle. A sub-cycle ends when
$r_{\sigma}$ escapes $(c_{\sigma}/ 2,c_{\sigma}+1)$; it is $T$-successful if $r_{\sigma}\geq c_{\sigma}+1$ and $T$-unsuccessful
if $r_{\sigma}\leq c_{\sigma} / 2$. 
Hence, intuitively speaking, the strategies guess whether $r_{\sigma}$ is going to escape $(c_{\sigma}/ 2,c_{\sigma}+1)$ 
from the right or the left end, with $T$ betting on the first outcome and $N$ betting on the latter.
Let $w_{\sigma}$ be the wagers of $M$. Since $M$ is $g$-timid and $g$ is nondecreasing and unbounded, 
there exists 
$M$-computable non-increasing $h:\Nat\to\mathbb{Q}$ which tends to 0 and such that
$w_{\sigma}\leq h(|\sigma|)$ for each $\sigma$. Using this property of $h$, \ie the $g$-timidness of $M$,
we will ensure that
once $r_{\sigma}$ escaped from $(c_{\sigma}/ 2,c_{\sigma}+1)$, it is still in $(c_{\sigma}/ 4,c_{\sigma}+2)$.
This is crucial in ensuring that $T, N_{\rho}$ are supermartingales.
These parameters are summarized in Table \ref{iHCIFQhir}.

\subsection{Hedging strategy for bold play: formal construction for Theorem \ref{lL3i4uiQDZ}}\label{CHwOEWwFd}
Given $M$ as in the statement of Theorem \ref{lL3i4uiQDZ}, we
define the saving strategy $T$, the set $I$ 
of strings/stages where new cycles are initiated, 
and the family $N_{\rho}, \rho\in I$ of
backup saving strategies. In terms of the index function $\sigma\mapsto i_{\sigma}$ discussed above (formally defined below) we may set:
\begin{equation*}
I=\{i_{\rho}\ |\ i_{\rho}\neq i_{\hat{\rho}}\}
\hspace{0.5cm}\textrm{where $\hat{\rho}$ denotes the predecessor of $\rho$.}
\end{equation*}
If $i_{\sigma} = \rho $, then we say $N_{\rho}$ is the {\em active} backup of $T$ at stage $\sigma$.
In the spirit of the informal discussion at the beginning of \S\ref{iPgHQCWM76},
stages will be inductively classified as {\em neutral stages} or {\em sub-cycle stages}.
The last stage of a sub-cycle is also called as (sub-cycle) {\em ending  stage}.
Our strategies only save on the ending stages.
For simplicity in the formal construction we use 
the conventions of \S\ref{AVVUpDcdaX} and for each  stage $\sigma$: 
 \begin{list}{$\bullet$}{\leftmargin=1em \itemindent=1em}
\item each ending stage $\sigma$ is divided into two steps, the betting and the saving step.
\item the value  $r$ at the end  betting steps is denoted by $r^0_{\sigma}$,
while $r_{\sigma}$ refers to the  end of saving steps.
\item we only specify the wager of the current active backup strategy, 
as for all other backup strategies their wagers are 0;
the savings  $s_n(\sigma), s_t(\sigma)$ are always assumed to be 0, unless
explicitly set otherwise; this happens to $s_n$ on the $T$-failed ending stages
and to $s_t$ on the $T$-successful ending stages.
\end{list}

\begin{table}
\colorbox{black!5}{\arrayrulecolor{green!50!black} 
  \begin{tabular}{rl}
\cmidrule[0.5pt]{1-2}
{\small $w_{\sigma}$}  & {\small wager of $M$ at $\sigma$} \\[0.5ex]
{\small $v_{\sigma}$}  & {\small wager of $T$ at $\sigma$} \\[0.5ex]
{\small $n_{\sigma}$}  & {\small wager of active $N$ at $\sigma$} \\[0.5ex]
{\small $r_{\sigma}$}  & {\small $T(\sigma)-M(\sigma)$} \\[0.5ex]
\cmidrule[0.5pt]{1-2}
\end{tabular}}\centering
\hspace{0.2cm}
\colorbox{black!5}{\arrayrulecolor{green!50!black} 
\begin{tabular}{rl}
\cmidrule[0.5pt]{1-2}
{\small $c_{\sigma}$}  & {\small marker of $r$ at the starting stage of a sub-cycle} \\[0.5ex]
{\small $i_{\sigma}$}  & {\small index $\rho$ of backup strategy $N_{\rho}$ active at $\sigma$} \\[0.5ex]
{\small $s_t,s_n$}  & {\small saving functions for $T,N$ respectively} \\[0.5ex]
{\small $h$}  & {\small function with $w_{\sigma}\leq h(|\sigma|)$, $\lim_n h(n)=0$} \\[0.5ex]
\cmidrule[0.5pt]{1-2}
\end{tabular}}\centering
\caption{Parameters for the proof of Theorem \ref{lL3i4uiQDZ} in \S\ref{iPgHQCWM76}.}\label{iHCIFQhir}
\vspace{-0.5cm}
\end{table}

We now inductively define the wagers $v_{\sigma},n_{\sigma}$ of the strategies $T,N$ along 
with their saving functions $s_t,s_n$ and 
the type of the stage $\sigma$.\footnote{The subscript in a wager denotes the argument for the wager function --
not any suppressed index of the corresponding strategy.}

{\bf Construction.}
Let $T(\lambda)=M(\lambda)+1$, then $r_{\lambda}=1$.
Let $c_{\lambda}=r_{\lambda}$, $i_{\lambda}=\lambda$, and $\lambda$ be neutral.

Given $\sigma\neq\lambda$, 
inductively assume that $c_{\hat{\sigma}}, i_{\hat{\sigma}}$ and the type of $\hat{\sigma}$
have been defined, consider the cases:
\begin{enumerate}[\hspace{0.1cm}(a)]
\item {\bf if $\hat{\sigma}$ is neutral stage:} Let 
$v_{\hat{\sigma}}=n_{\hat{\sigma}}=w_{\hat{\sigma}}$,
$c_{\sigma}=c_{\hat{\sigma}}$, $i_{\sigma}=i_{\hat{\sigma}}$.
And
\begin{itemize}
\item if $ N (\sigma) > 2 \ceil{2 / c_{\sigma} } $ and $h(|\sigma|) < \min \{1, c_{\sigma}/ 4\}$,
let $\sigma$ start a $i_{\sigma}$-cycle;
\item otherwise, let $\sigma$ be a neutral stage.
\end{itemize}
\item {\bf if $\hat{\sigma}$ is a (sub-)cycle stage:} Let 
$v_{\hat{\sigma}}=2w_{\hat{\sigma}}$, $n_{\hat{\sigma}}= - \ceil{2 / c_{\hat{\sigma}} } \cdot w_{\hat{\sigma}}$ and:
\begin{itemize}
\item if $r^0_{\sigma}\in (c_{\hat{\sigma}} / 2, c_{\hat{\sigma}} + 1)$, let $c_{\sigma}=c_{\hat{\sigma}}$,
$i_{\sigma}=i_{\hat{\sigma}}$ and $\sigma$ be a sub-cycle stage;
\item if $r^0_{\sigma}\leq c_{\hat{\sigma}} / 2 $, mark $\hat{\sigma}$ as a {\em $T$-failed ending stage}
and let $s_{n}(\hat{\sigma})=1$, $c_{\sigma}= r_{\sigma} $,
$i_{\sigma} = i_{\hat{\sigma}}$, and $\sigma$ be a neutral stage;
\item if $r^0_{\sigma}\geq c_{\hat{\sigma}} + 1$, mark $\hat{\sigma}$ as a {\em $T$-successful ending stage}
and let $s_{t}(\hat{\sigma})=1$, $c_{\sigma}= r_{\sigma}$,
$i_{\sigma} = \sigma$, and $\sigma$ be a neutral stage.
\end{itemize}
\end{enumerate}

\subsection{Verification of hedging strategy for bold play: concluding the proof of Theorem \ref{lL3i4uiQDZ}}
Recall our convention \eqref{LVSHDGcf4N} regarding the subscripts. 
A direct consequence of the construction in \S\ref{CHwOEWwFd} is:
\begin{equation}\label{lem:neutpro}
\parbox{14cm}{if $\sigma \neq \lambda$ and $\hat{\sigma}$ is a neutral stage then 
$N(\sigma) -  M(\sigma) =  N(\hat{\sigma}) - M(\hat{\sigma}) $ and 
	$r_{\sigma} = r_{\hat{\sigma}} =c_{\hat{\sigma}} = c_{\sigma}$.}
\end{equation}
{\em Proof of \eqref{lem:neutpro}.}
The first equality follows directly by the assignment
$v_{\hat{\sigma}}=n_{\hat{\sigma}}=w_{\hat{\sigma}}$ which takes place in clause (a) which is
dedicated to neutral stages.
The same clause sets
 $c_{\sigma}=c_{\hat{\sigma}}$. 
 For the remaining equalities recall that
 $r_{\sigma}$ is simply a notation for $T(\sigma)-M(\sigma)$.
 Since $v_{\hat{\sigma}}=w_{\hat{\sigma}}$ and no saving occurs on a neutral stage 
 $\hat{\sigma}$ we have
 $r_{\sigma} = r_{\hat{\sigma}}$. Finally $c_{\hat{\sigma}}=r_{\hat{\sigma}}$ follows from the fact
 that the only way that (a) can be accessed in the construction loop is from one of the
 last two clauses of (b),
both of which set $c_{\sigma}=r_{\sigma}$ (and during this transition $\sigma$ becomes $\hat{\sigma}$).

\begin{lem}[Parameters at ending stages]\label{lem:cyclecha}
	For any $\sigma \neq \lambda$ such that $\hat{\sigma}$ is a sub-cycle stage
	and $\eta $ is the starting stage of that sub-cycle interval, 
\begin{enumerate}[\hspace{0.5cm}(i)]
	\item if $\hat{\sigma}$ is not an ending stage, $r_{\sigma} > r_{\eta}/2$ and $N(\sigma) > \ceil{2 / r_{\eta} } $;
	\item if $\hat{\sigma}$ is a $T$-failed ending stage, $ r_{\sigma} > r_{\eta}/4 $, $N$ saves 1 and $N(\sigma) - M(\sigma) \ge  N(\eta) - M(\eta) $;
	\item if $\hat{\sigma}$ is a $T$-successful ending stage, $r_{\sigma} \ge r_{\eta}$, $N(\sigma) > 0$ and $T$ saves 1.
\end{enumerate}
\end{lem}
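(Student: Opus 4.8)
The plan is to verify the three items of Lemma~\ref{lem:cyclecha} by induction along the sub-cycle interval $[\eta,\hat\sigma]$, tracking the parameter $r=T-M$ as it evolves stage by stage, and using in an essential way the $g$-timidness bound $w_{\sigma}\le h(|\sigma|)$ together with the entry condition $h(|\sigma|)<\min\{1,c_{\sigma}/4\}$ that the construction imposes at the start of a cycle. Recall that at the start stage $\eta$ we have $c_{\eta}=r_{\eta}$ and, during the sub-cycle, $v=2w$ and $n=-\lceil 2/c\rceil\cdot w$ with $c=c_{\eta}$ held fixed. Since $T$ and $M$ follow the same outcome preference but $T$ wagers $2w$ against $M$'s $w$, on each betting step the difference $r$ changes by exactly $\pm w_{\sigma}$ (the sign depending on whether the favoured outcome occurs), so $|r_{\sigma}-r_{\hat\sigma}|\le h(|\hat\sigma|)<\min\{1,c/4\}$ for a single step while the cycle is running.

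First I would establish (i): while $\hat\sigma$ is a genuine (non-ending) sub-cycle stage, the defining test $r^0_{\sigma}\in(c/2,c+1)$ has kept $r$ inside that open interval at every previous ending-test, so in particular $r_{\sigma}>c/2=r_{\eta}/2$; the claim $N(\sigma)>\lceil 2/r_{\eta}\rceil$ then follows because $N$'s wager during the sub-cycle has the fixed magnitude $\lceil 2/c\rceil\cdot w$ and $N$ started the cycle with $N>2\lceil 2/c_{\sigma}\rceil$ (the neutral-stage entry condition), and one checks that the cumulative downward drift of $N$ over the sub-cycle is controlled — this is where one wants a running estimate that $N$ cannot fall below $\lceil 2/c\rceil$, mirroring the role of the hypothesis $N(\sigma)>2\lceil 2/c_{\sigma}\rceil$ at cycle start. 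Next, for (ii) and (iii), the point is that the ending test fires on the \emph{betting-step} value $r^0_{\sigma}$; because a single betting step moves $r$ by at most $h(|\hat\sigma|)<c/4$, if $r^0_{\sigma}\le c/2$ then $r^0_{\sigma}>c/2-h>c/4=r_{\eta}/4$, giving the lower bound in (ii), and symmetrically if $r^0_{\sigma}\ge c+1$ then $r^0_{\sigma}<(c+1)+h<c+2$; after the saving step (where $T$ or $N$ saves exactly $1$) one reconciles $r_{\sigma}$ with $r^0_{\sigma}$: on a $T$-failed ending $N$ saves $1$ so $r$ itself is unchanged by the save, whereas on a $T$-successful ending $T$ saves $1$ so $r_{\sigma}=r^0_{\sigma}-1\ge c=r_{\eta}$, which is the inequality claimed in (iii). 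The statements $N(\sigma)>0$ and the capital-difference inequality $N(\sigma)-M(\sigma)\ge N(\eta)-M(\eta)$ in (ii) come from summing the anti-symmetric bets of $N$ against $M$ over the sub-cycle and observing that on a $T$-failed exit the net effect on $N-M$ is nonnegative, again using the timidness bound to keep the error terms below the margin.

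The main obstacle I anticipate is the bookkeeping for $N$ during the sub-cycle: one must show not merely that $N$ ends up positive but that it stays bounded away from $0$ throughout (otherwise $N$ fails to be a supermartingale or cannot sustain the next sub-cycle), and this requires combining (a) the entry threshold $N>2\lceil 2/c\rceil$, (b) the fact that the number of betting steps in a sub-cycle is not a priori bounded, and (c) the shrinking wager sizes $w_{\sigma}\le h(|\sigma|)\to 0$ to argue that the total multiplicative loss incurred by $N$'s risky bets over one sub-cycle is at most a bounded fraction of its capital. Concretely, I would bound the per-step change in $N$ by $\lceil2/c\rceil\cdot h(|\sigma|)$ and the length of the sub-cycle by the number of $\pm w$ steps needed for $r$ to travel a distance of order $c$, namely order $c/h$ steps, so the total variation of $N$ over the sub-cycle is of order $\lceil2/c\rceil\cdot c=O(1)$, comfortably absorbed by the entry threshold. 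Once this uniform control is in hand, items (i)--(iii) follow by the case analysis on the three outcomes of clause~(b) of the construction, and the constants $\tfrac12,\tfrac14,+1,+2$ are exactly arranged so that the timidness slack $h<\min\{1,c/4\}$ closes every gap.
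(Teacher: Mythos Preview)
Your overall shape is right—the one-step overshoot analysis at the ending stage (using $h<\min\{1,c/4\}$) is exactly what the paper does, and your treatment of how $r_\sigma$ relates to $r^0_\sigma$ in cases (ii) and (iii) is correct. The genuine gap is in how you control $N$ during the sub-cycle. Your plan is to bound the per-step change of $N$ by $\lceil 2/c\rceil\cdot h$ and the number of steps by ``order $c/h$''. But the number of steps in a sub-cycle is \emph{not} bounded by $c/h$: the process $r$ is a $\pm w$ walk inside $(c/2,c+1)$ and can oscillate there for arbitrarily many stages before exiting. So the product ``(per-step change)$\times$(number of steps)'' gives no bound at all, and your argument for $N(\sigma)>\lceil 2/r_\eta\rceil$ in (i), for $N(\sigma)-M(\sigma)\ge N(\eta)-M(\eta)$ in (ii), and for $N(\sigma)>0$ in (iii) all collapse.

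What you are missing is that the relation between $N$ and $r$ is \emph{path-independent}. During the sub-cycle the wagers satisfy $v=2w$ and $n=-\lceil 2/c\rceil\,w$, so summing over all betting steps from $\eta$ to $\sigma$ gives
\[
r^0_\sigma-r_\eta \;=\; M(\sigma)-M(\eta)
\quad\text{and}\quad
N^0(\sigma)-N(\eta)\;=\;-\lceil 2/c_\eta\rceil\,(r^0_\sigma-r_\eta),
\]
regardless of how many steps occurred. This single identity, combined with the fact that $r^0_\sigma$ lies in the known interval $(r_\eta/4,\,r_\eta+2)$ at exit (and in $(r_\eta/2,r_\eta+1)$ before exit), immediately yields all the $N$-estimates: e.g.\ in (i), $r_\sigma-r_\eta<1$ gives $N(\sigma)>N(\eta)-\lceil 2/c_\eta\rceil>\lceil 2/r_\eta\rceil$; in (iii), $r^0_\sigma-r_\eta<2$ gives $N(\sigma)>N(\eta)-2\lceil 2/c_\eta\rceil>0$. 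Once you write down this proportionality, no step-counting is needed and the constants line up exactly.
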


\begin{proof}
First we observe that $c_{\tau}$ does not change in a sub-cycle stage $\tau$.
Given the hypothesis of the lemma about $\sigma,\eta$ and the construction,
we have $T(\sigma) - T(\eta) =  2 \cdot (M(\sigma) - M(\eta))$ so
\begin{eqnarray*}
r^0_{\sigma} - r_{\eta} = & \big(T(\sigma) - M(\sigma)\big) - \big(T(\eta) - M(\eta)\big) &= M(\sigma) - M(\eta)\\[0.2cm]
N^0(\sigma) - N(\eta) = & - \ceil{2 / c_{\eta} } \cdot 
\big(M(\sigma) - M(\eta)\big) &=  - \ceil{2 / c_{\eta} } \cdot (r^0_{\sigma} - r_{\eta}).
\end{eqnarray*}
On the other hand, by \eqref{lem:neutpro}, $c_{\eta} = r_{\eta}$ and by the construction, 
$N (\eta) > 2 \ceil{2 / r_{\eta} }$, $h(|\eta|) < \min \{1, r_{\eta}/ 4\}$.
 
Given these facts we may proceed to the proof of the clauses of the lemma, starting with (i).
If $\hat{\sigma}$ is not an ending stage, by the construction we have
$r_{\sigma}\in (c_{\hat{\sigma}} / 2, c_{\hat{\sigma}} + 1) = (c_{\eta} / 2, c_{\eta} + 1) = (r_{\eta} / 2, r_{\eta} + 1) $, so
\[
N(\sigma) - N(\eta) =  - \ceil{2 / c_{\eta} } \cdot (r_{\sigma} - r_{\eta}) > - \ceil{2 / c_{\eta} }
\hspace{0.5cm}\textrm{so} \hspace{0.5cm}
N(\sigma) > N(\eta) - \ceil{2 / c_{\eta} } > \ceil{2 / r_{\eta} }
\]
which concludes the proof  of (i). For the last two clauses, assuming that
 $\hat{\sigma}$ is an ending stage, we have
$|r^0_{\sigma} - r_{\hat{\sigma}}| = |M(\sigma) - M(\hat{\sigma})| = |w_{\hat{\sigma}}| < h(|\hat{\sigma}|) \le h(|\eta|)$
and $r_{\hat{\sigma}} \in (r_{\eta} / 2, r_{\eta} + 1)$, so
\[
r^0_{\sigma} \in \Big(r_{\eta} / 2 - h(|\eta|) , r_{\eta} + 1 + h(|\eta|)\Big) \subseteq \big(r_{\eta} / 4 , r_{\eta} + 2\big).
\]
If $\hat{\sigma}$ is a $T$-failed ending stage then $r^0_{\sigma}\leq c_{\hat{\sigma}} / 2 = r_{\eta} / 2$ and $N$ saves 1.
Moreover $r_{\sigma} = r^0_{\sigma} \in (r_{\eta} / 4 , r_{\eta} / 2]$
and 
\[
N(\sigma) - N(\eta) =  - \ceil{2 / c_{\eta} } \cdot (r_{\sigma} - r_{\eta}) - 1 
\ge \ceil{2 / c_{\eta} } \cdot r_{\eta} / 2 - 1 \ge 0 \ge M(\sigma) - N(\eta)
\]
which shows that $N(\sigma) - M(\sigma) \ge  N(\eta) - M(\eta)$ as required for (ii).
Finally if $\hat{\sigma}$ is a $T$-successful ending stage, 
we have $r^0_{\sigma}\geq c_{\hat{\sigma}} + 1 = r_{\eta} + 1$ and $T$ saves 1.
Moreover  $r_{\sigma} = r^0_{\sigma} -1 \in [r_{\eta}  , r_{\eta} + 1)$
and 
\[
N(\sigma) - N(\eta) =  - \ceil{2 / c_{\eta} } \cdot (r^0_{\sigma} - r_{\eta})
> - 2 \ceil{2 / c_{\eta} }
\]
which shows that $N(\sigma) >  N(\eta) - 2 \ceil{2 / c_{\eta} } > 0$ as required for (iii).
\end{proof}

\begin{lem}
Strategies $T$ and  $N_{\rho}, \rho \in I$ are $g$-granular supermartingales.
\end{lem}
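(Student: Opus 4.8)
The plan is to verify the three defining properties in turn: $g$-granularity of the wagers, non-negativity of the values, and the supermartingale inequality. Granularity is immediate from the construction. Since $M$ is $g$-granular, each $w_{\sigma}$ is an integer multiple of $2^{-g(|\sigma|+1)}$; by cases (a) and (b) the wager of $T$ at $\sigma$ is $w_{\sigma}$ or $2w_{\sigma}$, and the wager of the active backup at $\sigma$ is $w_{\sigma}$ or $-\ceil{2/c_{\sigma}}\cdot w_{\sigma}$, so in each case an integer multiple of $w_{\sigma}$, hence of $2^{-g(|\sigma|+1)}$ (this is also the wager condition recorded in clause (a) of Theorem~\ref{lL3i4uiQDZ}). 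The supermartingale inequality is equally routine: both $T$ and each $N_{\rho}$ are updated by first placing a bet, which preserves the identity $2X(\sigma)=X(\sigma\ast 0)+X(\sigma\ast 1)$ for the cover, and then, only at ending stages, subtracting a marginal saving which the construction fixes to lie in $\{0,1\}$; since this is non-negative we get $2X(\sigma)\ge X(\sigma\ast 0)+X(\sigma\ast 1)$ for every $\sigma$.

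The substantive content is non-negativity, which I would obtain by induction on $|\sigma|$, tracking $r_{\sigma}=T(\sigma)-M(\sigma)$ and $N_{\rho}(\sigma)-M(\sigma)$ and invoking \eqref{lem:neutpro} and Lemma~\ref{lem:cyclecha}. For $T$ it suffices to show $r_{\sigma}>0$, since $M(\sigma)\ge 0$. On neutral stages \eqref{lem:neutpro} gives $r_{\sigma}=c_{\sigma}$, and the marker $c_{\sigma}$ is always positive: it starts at $c_{\lambda}=1$ and is reset, at each ending stage, to $r_{\sigma}$, which by Lemma~\ref{lem:cyclecha}(ii),(iii) is at least $r_{\eta}/4>0$. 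On non-ending sub-cycle stages Lemma~\ref{lem:cyclecha}(i) keeps $r_{\sigma}$ in $(c_{\hat\sigma}/2,c_{\hat\sigma}+1)$, and at ending stages the window estimate $r_{\sigma}\in(r_{\eta}/4,r_{\eta}+2)$ from the proof of Lemma~\ref{lem:cyclecha} keeps it positive as well; hence $T(\sigma)=M(\sigma)+r_{\sigma}>0$ throughout.

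For the backup strategies I would carry, along the induction, the invariant that $N_{\rho}(\sigma)-M(\sigma)\ge 0$ at every neutral stage and every sub-cycle boundary stage in the life interval of $N_{\rho}$: it equals $0$ at the starting stage $\rho$, where $N_{\rho}$ is handed initial capital $M(\rho)$; it is unchanged on neutral intervals by \eqref{lem:neutpro}; and when a sub-cycle pauses or ends in $T$-failure it recovers to at least $N_{\rho}(\eta)-M(\eta)$ by Lemma~\ref{lem:cyclecha}(ii), while a $T$-successful ending cancels $N_{\rho}$, after which it is constant and positive by Lemma~\ref{lem:cyclecha}(iii). This gives $N_{\rho}(\sigma)\ge M(\sigma)\ge 0$ at all such stages. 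At the intermediate stages of a running sub-cycle the amplified betting of $N_{\rho}$ may drive $N_{\rho}-M$ down, but there Lemma~\ref{lem:cyclecha}(i) bounds $N_{\rho}(\sigma)$ itself below by $\ceil{2/r_{\eta}}>0$ --- which is exactly where the cycle-entry guard $h<\min\{1,c/4\}$ of case (a), available by the $g$-timidness of $M$, is used, since it prevents $r$ from overshooting the window $(r_{\eta}/4,r_{\eta}+2)$. Combining the cases gives $N_{\rho}(\sigma)\ge 0$ for all $\sigma$, completing the verification. The main obstacle here is not any single estimate but the bookkeeping: running the induction in lockstep with the classification of stages into neutral, sub-cycle, and ending, and making sure the backup's amplified wagers never dip it below zero before a sub-cycle resolves --- which rests entirely on the timidness hypothesis, through $h$.
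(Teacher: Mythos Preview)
Your proposal is correct and follows essentially the same route as the paper: granularity from the wagers being integer multiples of $w_{\sigma}$, and non-negativity via the induction $r_{\sigma}>0$ for $T$ together with the invariant $N_{\rho}(\sigma)-M(\sigma)\ge 0$ at neutral and boundary stages plus Lemma~\ref{lem:cyclecha}(i) for the sub-cycle interior. Your explicit check of the supermartingale inequality (bet then non-negative saving) is a welcome elaboration that the paper leaves implicit under ``by the construction''.
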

\begin{proof}
As the wagers of $T$ and $N_{\rho}$ are always integer multiples of the wagers of $M$,
the $g$-granularity of $T, N_{\rho}$ follows from the $g$-granularity of $M$.
Then by the construction, we only need to verify that $T$ and $N_{\rho}$ always have non-negative value.
As $r_{\lambda} = 1 > 0$, then by \eqref{lem:neutpro} and Lemma~\ref{lem:cyclecha} inductively
we easily get $r_{\sigma} > 0 $ for all $\sigma$.
Then $T(\sigma) = M(\sigma) + r_{\sigma} > 0 $ for all $\sigma$, as required.
Fix $\rho \in I$, for simplicity we drop the subscript $\rho$ for $N_{\rho}$ for the rest of this proof.
First for all $\sigma \nsucceq \rho$, $N(\sigma) = N(\rho) = M(\rho) \ge 0$.
And by Lemma~\ref{lem:cyclecha} $N(\sigma) > 0$ for all $\sigma$ such that 
$\hat{\sigma}$ is a sub-cycle stage but not $T$-failed ending stage.
Moreover, if $\hat{\sigma}$ is a $T$-successful ending stage, $N$ has positive value,
then for all the ending stages of $N$ it also has positive value.
On the other hand, as $\rho$ is a neutral stage for $N$ and $N(\rho) - M(\rho) = 0$,
by Lemma~\ref{lem:neutpro} and Lemma~\ref{lem:cyclecha} inductively we get that for all $\sigma$
such that $\hat{\sigma}$ is a neutral stage or $T$-failed ending stage, $N(\sigma) - M(\sigma) \ge 0$,
i.e., $N(\sigma) \ge M(\sigma) \ge 0$.
\end{proof}

\begin{lem}[Sub-cycles]\label{czfV7tMnqi}
Along any path $X$ of outcomes such that
$\limsup_n M(X\restr_n)=\infty$, there are infinitely many sub-cycles starting and ending along $X$.
\end{lem}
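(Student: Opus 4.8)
The plan is to argue by contradiction: suppose $\limsup_n M(X\restr_n)=\infty$ but only finitely many sub-cycles start and end along $X$. There are two ways this can happen, and I would rule out each in turn. First, it could be that after some stage the construction is permanently in a neutral run along $X$ (no sub-cycle ever starts again). Second, it could be that some sub-cycle starts along $X$ and then never ends, so that $X$ eventually stays inside a single open sub-cycle interval. I would handle the perpetual-neutral case first, since it is the one that directly contradicts the hypothesis on $M$.

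For the perpetual-neutral case, recall from \eqref{lem:neutpro} that on a neutral stage $T$ copies the wagers of $M$ exactly and does not save, so along a neutral run the parameter $r_\sigma=T(\sigma)-M(\sigma)$ is constant, equal to some fixed value $c$, and likewise $N(\sigma)-M(\sigma)$ is constant. The only obstruction to starting a new cycle at a stage $\sigma$ is the conjunction of the trigger conditions $N(\sigma)>2\ceil{2/c_\sigma}$ and $h(|\sigma|)<\min\{1,c_\sigma/4\}$ in clause (a). Since $c_\sigma=c$ is fixed and $h(n)\to 0$, the second condition holds for all sufficiently long $\sigma$. So if no cycle ever starts, it must be that $N(\sigma)\le 2\ceil{2/c}$ for all sufficiently long $\sigma\restr_n$ along $X$; but during a neutral run $N$ copies $M$, so $N(X\restr_n)-M(X\restr_n)$ is constant, forcing $M(X\restr_n)$ to be bounded along $X$ — contradicting $\limsup_n M(X\restr_n)=\infty$. (One should also note that before the final neutral run there were at most finitely many earlier cycles, which is fine, and that the value $c$ is the value of $r$ carried over from the last ending stage, which is positive by the supermartingale lemma just proved.) Hence cycles do start infinitely often along $X$, so sub-cycles start infinitely often along $X$.

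It remains to show each such sub-cycle also \emph{ends} along $X$. Suppose a sub-cycle starts at $\eta$ along $X$ and never ends, so $r^0_\sigma\in(c_\eta/2,c_\eta+1)$ for all $\sigma\succ\eta$ on $X$ inside the cycle, with $c_\eta=r_\eta$ fixed. During a sub-cycle $T$ bets $v=2w$ where $w$ is the wager of $M$, so $T(\sigma)-T(\eta)=2(M(\sigma)-M(\eta))$, whence $r^0_\sigma-r_\eta=M(\sigma)-M(\eta)$; thus $M(\sigma)-M(\eta)=r^0_\sigma-r_\eta$ stays confined to the bounded interval $(-r_\eta/2,\,1)$ for all $\sigma\succ\eta$ on $X$ in the cycle. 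If the cycle never ends this forces $M(X\restr_n)$ to be bounded along $X$, again contradicting $\limsup_n M(X\restr_n)=\infty$. (Equivalently: $r$ tracks $M$ linearly during a sub-cycle, so an unbounded $M$ must drive $r$ out of the window $(c_\eta/2,c_\eta+1)$, ending the sub-cycle.) Since sub-cycles start infinitely often and each one that starts must end, there are infinitely many sub-cycles starting and ending along $X$, as claimed.

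The step I expect to be the main obstacle is the bookkeeping in the perpetual-neutral case: one must be careful that $c_\sigma$ really is eventually constant along a neutral run (this uses \eqref{lem:neutpro}), that the carried-over value is strictly positive (this uses the preceding supermartingale lemma, so that $\ceil{2/c}$ is well defined), and that the identity "$N$ copies $M$ on neutral stages" is applied to the \emph{active} backup $N=N_{i_\sigma}$ under the subscript convention \eqref{LVSHDGcf4N} — there is no subtlety once these are in place, but they must be invoked explicitly. The sub-cycle-never-ends case is comparatively routine given the linear relation $r^0_\sigma-r_\eta=M(\sigma)-M(\eta)$ established in the proof of Lemma \ref{lem:cyclecha}.
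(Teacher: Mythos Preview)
Your proposal is correct and follows essentially the same route as the paper: the same two-case contradiction (eventually always neutral, or a sub-cycle that never ends), the same use of \eqref{lem:neutpro} to freeze $c_\sigma$ and $N-M$ on a neutral run so the trigger condition forces $N$ (hence $M$) bounded, and the same linear relation $r^0_\sigma-r_\eta=M(\sigma)-M(\eta)$ inside a sub-cycle to bound $M$ in the non-ending case. One small expository slip: after disposing of the perpetual-neutral case you write ``Hence cycles do start infinitely often,'' but that inference is premature until the non-ending case is also excluded (a single never-ending sub-cycle would give infinitely many sub-cycle stages but only one start); since you immediately handle that case, the argument is sound, but you should reorder the sentence or phrase it as ``sub-cycle stages occur infinitely often.''
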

\begin{proof}
If only finitely many sub-cycles occur along $X$, one of the following must hold:
\begin{enumerate}[\hspace{0.5cm}(i)]
\item almost all prefixes of $X$ are neutral;
\item there exists a sub-cycle along $X$ which never ends. 
\end{enumerate}
It remains to show each of the above clauses implies  $\limsup_n M(X\restr_n)<\infty$.
First assume that (i) holds and that $\eta$ is the least prefix of $X$ such that 
all prefixes of $X$ after $\eta$ are neutral. 
If $\rho$ is the index of $\eta$, then for all $\eta \preceq \sigma \prec X$
we have $i_{\sigma}=\rho$ and $c_{\sigma}=c_{\eta}$. 
As $h \rightarrow 0$, then there is $\eta \preceq \tau \prec X$ such that 
$h(|\tau|) < \min \{1, c_{\eta}/ 4\} = \min \{1, c_{\tau}/ 4\}$.
Moreover, then for all $\tau \preceq \sigma \prec X$, 
$h(|\sigma|) < \min \{1, c_{\sigma}/ 4\}$.
As no sub-cycle of $N_{\rho}$ starts at any prefix of $X$ after $\eta$,
then for all $\tau \preceq \sigma \prec X$, 
$ N (\sigma) \le 2 \ceil{2 / c_{\sigma} } = 2 \ceil{2 / c_{\eta} } $.
Hence $M$ is bounded above along $X$, as required.
Second, assume that (ii) holds and at $\rho\prec X$ a sub-cycle starts, which never ends along $X$.
Then for all $\eta\prec X$ after $\rho$ we have $r_{\eta}-r_{\rho}=M(\eta)-M(\rho)$
By condition for ending a sub-cycle in the construction,
it follows that $r_{\eta}$ remains bounded above by $c_{\rho} + 1 $ along $X$, hence
$M$ is bounded above along $X$, as required.
\end{proof}

\begin{lem}
Along any path $X$ of outcomes such that
$\limsup_n M(X\restr_n)=\infty$, one of the following holds:
\begin{enumerate}[\hspace{0.5cm}(a)]
\item there are infinitely many $T$-successful sub-cycles along $X$;
\item all but finitely many sub-cycles along $X$ are $T$-failed.
\end{enumerate}
If (a) holds, then $T$ saves successfully along $X$. If (b) holds, then there exists $\rho\prec X$
such that $N_{\rho}$ saves successfully along $X$, where it remains active.
\end{lem}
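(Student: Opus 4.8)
The plan is to read off both clauses from Lemma~\ref{czfV7tMnqi} together with the quantitative information in Lemma~\ref{lem:cyclecha}. Fix a path $X$ with $\limsup_n M(X\restr_n)=\infty$. By Lemma~\ref{czfV7tMnqi} there are infinitely many sub-cycles that both start and end along $X$, and by the construction in~\S\ref{CHwOEWwFd} every ending stage is either a $T$-successful or a $T$-failed ending stage. So the first step is the (trivial) dichotomy: either infinitely many of these sub-cycles are $T$-successful, which is clause (a), or only finitely many are, in which case all but finitely many are $T$-failed, which is clause (b).

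Under clause (a): by Lemma~\ref{lem:cyclecha}(iii), at each $T$-successful ending stage $\hat{\sigma}\prec X$ the construction sets $s_t(\hat{\sigma})=1$, and $s_t\ge 0$ at every stage; since $S_T(X\restr_n)$ is the sum of these marginal savings over the prefixes of $X\restr_n$ and infinitely many prefixes of $X$ carry a contribution of $1$, we get $\lim_n S_T(X\restr_n)=\infty$. Hence $T$ saves successfully along $X$. I would spell out only that the savings function is non-decreasing and that nothing but these $T$-successful ending stages contributes, both of which are immediate from the construction.

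Under clause (b): here the index $i_{\sigma}$ is reset (from $i_{\hat{\sigma}}$ to $\sigma$) exactly at the $T$-successful ending stages, and is copied unchanged at every other stage; since along $X$ there are only finitely many $T$-successful ending stages, the index stabilizes at some prefix $\rho\prec X$ --- take $\rho=\lambda$ if there is no $T$-successful sub-cycle along $X$, and otherwise let $\rho$ be the immediate successor along $X$ of the last $T$-successful ending stage. Either way $\rho\in I$, the stage $\rho$ is neutral, $N_{\rho}$ is the active backup of $T$ at every $\sigma$ with $\rho\preceq\sigma\prec X$, and $N_{\rho}$ is never canceled along $X$ (a cancellation would be a $T$-successful ending stage after $\rho$), so it remains active along $X$. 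Only finitely many sub-cycles start before $\rho$, so by Lemma~\ref{czfV7tMnqi} infinitely many sub-cycles start and end along $X$ after $\rho$, and by the choice of $\rho$ each of them is $T$-failed; by Lemma~\ref{lem:cyclecha}(ii), at each such ending stage $N_{\rho}$ saves $1$. Therefore $\lim_n S_{N_{\rho}}(X\restr_n)=\infty$, i.e.\ $N_{\rho}$ saves successfully along $X$, where it remains active.

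I do not expect a genuine obstacle here: this lemma is the bookkeeping step that packages Lemmas~\ref{czfV7tMnqi} and~\ref{lem:cyclecha} into the two success scenarios advertised in~\S\ref{iPgHQCWM76}. The only point that rests on nontrivial earlier work is the guarantee that sub-cycles keep \emph{ending}, not merely starting, infinitely often along $X$; that is precisely Lemma~\ref{czfV7tMnqi}, whose proof invokes coarse granularity and the $g$-timidness of $M$ through the vanishing wager bound $h$. The residual care is purely in checking that the index really stabilizes (only $T$-successful ending stages disturb it) and that no savings leak in or out between ending stages (the savings functions are $0$ off the designated ending stages and non-decreasing), both direct consequences of the construction.
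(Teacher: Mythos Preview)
Your proof is correct and follows essentially the same approach as the paper's: invoke Lemma~\ref{czfV7tMnqi} for the dichotomy, then read off the savings from Lemma~\ref{lem:cyclecha}(iii) in case (a) and from the stabilization of the index plus Lemma~\ref{lem:cyclecha}(ii) in case (b). One small inaccuracy in your closing commentary: Lemma~\ref{czfV7tMnqi} does not rely on coarse granularity but only on the unboundedness of $g$ (via $h\to 0$); Theorem~\ref{lL3i4uiQDZ} makes no coarse-granularity assumption.
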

\begin{proof}
By the assumption about $X$ and Lemma \ref{czfV7tMnqi} there are infinitely many
sub-cycles along $X$. Since a new sub-cycle only starts after the previous one has ended,
and since each sub-cycle ends either $T$-successfully or in $T$-failure, it follows that
either (a) or (b) holds along $X$.
If (a) holds, then by Lemma~\ref{lem:cyclecha}, $T$ saves successfully along $X$.
If (b) holds, the indices of the initial segments of $X$ reach a limit $\rho$.
Hence starting from $\rho$ and along $X$, there will be infinitely many sub-cycles of $N_{\rho}$
and all of them will end in $T$-failure. Hence starting from $\rho$ and along $X$, 
strategy $N_{\rho}$ will remain active and
by Lemma~\ref{lem:cyclecha} it will successfully save along $X$.
\end{proof}

\section{Conclusion and brief discussion}\label{XXYGMo8eSb}
Liquidity  in betting situations, in the sense of infinite divisibility of the capital, 
allows for certain flexibilities in the strategies, including avoiding bankruptcy while placing
infinitely many bets, and saving an unbounded capital on the condition that the betting strategy is
successful. Such properties are based on the fact that liquidity allows arbitrary {\em scaling} of the
strategy, \ie the implementation of essentially the original strategy but with arbitrarily small available
capital.
A recent line of research 
\cite{cie/BienvenuST10,TeutChalcraft,Peretzwager, Peretzagainst,Teu14agCWGnp}
studied betting strategy without this property, where the wagers are restricted in certain ways.
For example,  \cite{Teu14agCWGnp} showed that successful saving
is not always possible in the presence of a fixed minimum wager restriction.
In the present work we studied the the possibility and impossibility of saving
in the presence of inflation or equivalently, as discussed in \S\ref{1JyMSDjB2p},
the presence of a shrinking minimum wager restriction. We found that there is a dichotomy
in the properties of such strategies, which is defined in terms of the rate of decrease of the
minimum wager $2^{-g(n)}$: {\em fine granularity} where $\sum_n 2^{-g(n)}$ is finite, and
{\em coarse granularity} where the sum is infinite.
In general, fine granularity allows for saving (subject to successful betting) while
coarse granularity does not. On the other hand, in the latter impossibility case, we found that
by employing aggressive (bold) saving strategies that have in total access
to unbounded capital, it is possible to ensure successful saving on any possible outcome sequence
where a given timid betting strategy succeeds.

Given the role of bold versus timid betting in our results, we would like to pose the question
whether this qualification is necessary in a complete analysis of the possibility of saving in
strategies with restricted wagers, or there is a classification that does not depend on it.
Another direction for exploration would be
the establishment of explicit connections of this recent line of research, with the two
more classic approaches that we discussed in \S\ref{1JyMSDjB2p}, based on
integration and game-theoretic probability. Although we could not identify
direct links in the different mathematical frameworks for betting, there is a considerable intersection
on the main themes that are studied. 

Our results can also be viewed in
the context of algorithmic randomness.
One of the standard approaches to the formalization of algorithmic randomness 
of infinite binary sequences is
based on betting strategies and was pioneered in \cite{Schnorr:71,Schnorr:75}. The intuitive idea
here is that algorithmically random sequences should be sequences of binary outcomes
on which no `effective' betting strategy can succeed.
This approach is essentially
equivalent to earlier formalizations in terms of statistical tests in \cite{MR0223179}
or compression in \cite{MR0366096}. In general, for each choice of a countable collection of
strategies as the {\em effective betting strategies}, we get a corresponding randomness notion.
In this way, various restrictions on the notion of betting and success of strategies
correspond to different strength of algorithmic randomness
(\eg see \cite[\S 6, \S 7]{rodenisbook} or \cite[\S 7]{Ottobook}).
Similarly, by
interpreting granularity as a feasibility condition on the strategies,
we obtain notions of {\em randomness against granular strategies}.
Considering {\em saving strategies} as opposed to {\em betting strategies},
we may view some of our results as separations or equivalences of the 
corresponding randomness notions.

\bibliographystyle{abbrvnat}
\bibliography{granuma}
\end{document}